\definecolor{gray}{rgb}{0.95,0.95,0.95}
\DeclareMathOperator{\tr}{tr}
\DeclareMathOperator{\homeo}{Homeo}
\DeclareMathOperator{\Sch}{Sch}
\theoremstyle{plain}% default
\newtheorem{theorem}{Theorem}[section]
\newtheorem{lemma}[theorem]{Lemma}
\newtheorem{proposition}[theorem]{Proposition}
\newtheorem*{corollary}{Corollary}
\theoremstyle{definition}
\newtheorem{definition}{Definition}[section]
\newtheorem{example}{Example}[section]
\theoremstyle{remark}
\newtheorem*{remark}{Remark}
\DeclarePairedDelimiter\abs{\lvert}{\rvert}
\title[Quantum fields for Thompson's groups]{Quantum fields for unitary representations of Thompson's groups $F$ and $T$}
\author{Tobias J.\ Osborne$^1$}
\address{$^1$Institut für Theoretische Physik, Leibniz Universität Hannover, Appelstraße 2, Hannover 30167, Germany}
\author{Deniz E.\ Stiegemann$^{1,2}$}
\address{$^2$ARC Centre for Engineered Quantum Systems, School of Mathematics and Physics, The University of Queensland, Brisbane, QLD 4072, Australia}
\date{\today}
\begin{document}

\begin{abstract}
We describe how to define observables analogous to quantum fields for the semicontinuous limit recently introduced by Jones in the study of unitary representations of Thompson's groups $F$ and $T$. We find that, in terms of correlation functions of these fields, one can deduce quantities resembling the conformal data, i.e., primary fields, scaling dimensions, and the operator product expansion. Examples coming from quantum spin systems and anyon chains built on the trivalent category $\mathit{SO}(3)_q$ are studied.
\end{abstract}

\maketitle

\section{Introduction}
Complex quantum systems approaching a quantum phase transition present fascinating and nontrivial physics \cite{sachdevQuantumPhaseTransitions2011}. To study such systems a multitude of methods have been developed, the most prominent being to model them via an effective quantum field, enabling the deployment of a multitude of quantum field techniques. When applied to a quantum phase transition one generically expects to obtain a conformal field theory (CFT) \cites{francesco_conformal_1997,cardyConformalInvarianceUniversality1984,cardyOperatorContentTwodimensional1986,bloteConformalInvarianceCentral1986,cardyLogarithmicCorrectionsFinitesize1986,affleckUniversalTermFree1986} as an effective model. This connection has led to a fruitful interplay whereby CFT techniques have led to powerful insights into quantum critical phenomena and, in turn, quantum lattice systems have provided microscopic models for exotic CFTs (see, e.g., \cites{zouConformalFieldsOperator2019,kooRepresentationsVirasoroAlgebra1994a,readAssociativealgebraicApproachLogarithmic2007,dubailConformalFieldTheory2010,gainutdinovLogarithmicConformalField2013,gainutdinovLatticeFusionRules2013,bondesanChiralSUKcurrents2015,milstedExtractionConformalData2017,zouConformalDataRenormalization2018} for a cross section of representative results). Thus it is that physicists regard quantum phase transitions and CFTs as largely synonymous.

At a physical level the approximation of quantum lattice systems via QFTs is a well-established utilitarian procedure and there is a standard lore available to identify the correct QFT modelling a given continuum limit, see e.g., \cites{tsvelikQuantumFieldTheory2007,auerbachInteractingElectronsQuantum1994}. However, quantum lattice systems continue to generate a ready supply of new and ever more fascinating examples challenging standard techniques. A key recent exemplar is the \emph{golden chain}, which is a one-dimensional lattice of Fibonacci anyons \cite{feiguin_interacting_2007}. Progress toward the correct continuum -- or scaling -- limit of this type of system, likely a rational CFT, has most recently been obtained in \cite{ziniConformalFieldTheories2018}, but many conjectures remain unresolved.

Conformal field theory itself, while a powerful tool, is far from a finished research area, both from the physical and mathematical sides. Nevertheless, it does appear that we are slowly converging on a reasonably complete mathematical framework in two spacetime dimensions. One intriguing consequence of recent mathematical investigations into CFTs is the conjecture, supported by the original work of Doplicher \cite{doplicherNewDualityTheory1989} and later by Bischoff \cites{bischoffRelationSubfactorsArising2015,bischoffRemarkCFTRealization2016}, that there is a correspondence between subfactors and CFTs \cite{Jones2010VonNA}. There is now a considerable body of evidence for this conjecture (see, e.g., \cites{xuExamplesSubfactorsConformal2018,calegariCyclotomicIntegersFusion2011} for some recent progress), with the correspondence mapped out apart from a set of most curious examples where there are certain exceptional subfactors with no known counterpart CFT. This most intriguing situation is best exemplified in terms of the Haagerup subfactor \cites{haagerupPrincipal1993,asaeda_exotic_1999}, which is the smallest (finite-depth, irreducible, hyperfinite) subfactor with index more than 4. It is possible to predict \cite{evansExoticnessRealisabilityTwisted2011} properties of the conjectured counterpart Haagerup CFT, but its construction likely goes beyond all known techniques, making it remarkable both as a mathematical construct and as a new physical example.

While there is no known counterpart CFT for the Haagerup subfactor, it is relatively straightforward to write down candidate microscopic models directly built from the corresponding trivalent category H3 \cite{morrison_categories_2015}. From the physical side the challenge now is to identify quantum phase transitions in these models and analyse their properties, i.e., the corresponding central charge etc., to decide if they correspond to something like the conjectured Haagerup CFT, and then to take the scaling or continuum limit around a phase transition. This approach is Jones' ``Royal Road'' \cite{jones_unitary_2014}, and is the most direct attack on the conjecture. Many challenges remain in taking a journey along the royal road, not least of which is the still largely immature status of the theory of quantum phase transitions, and tools therefore.

The search for a direct construction of a counterpart CFT for the Haagerup subfactor was commenced by Jones who initiated a programme \cites{jones_unitary_2014,jones_scale_2017,jones_no-go_2016} to construct continuum limits of lattice systems via a Kadanoff block spin renormalization ansatz. Motivated by the remarkable analogy between Thompson's groups $F$ and $T$ \cite{cannon_introductory_1996} and the conformal group, Jones constructed families of unitary representations for $F$ and $T$ using what physicists would term (nonuniform) \emph{tree tensor networks} (TTN) \cite{bridgeman_hand-waving_2016}. The kinematical Hilbert space described by these tensor networks is called the \emph{semicontinuous limit}. These representations have many striking properties, e.g., amongst others they can lead to knot invariants.

Thompson's groups $F$ and $T$ also play a key role in understanding holographic dualities, particularly the AdS/CFT correspondence, in high energy physics. By taking the semicontinuous limit of the holographic codes of Pastawski, Yoshida, Harlow, and Preskill \cite{pastawski_holographic_2015} one obtains a combinatorial Hilbert space for a boundary theory analogous to a CFT. Dynamics may be then
introduced -- an approach with origins in the work of Penner, Funar, and Sergiescu \cites{penner_universal_1993,funar_central_2010,schneps_geometric_1997} -- by building Jones' unitary representation of $T$ \cite{osborneDynamicsHolographicCodes2017b}. The bulk Hilbert space of the corresponding gravitylike theory is
then realised as a special subspace of the semicontinuous limit spanned distinguished states. The analogue of the group of large bulk diffeomorphisms is then given by a unitary representation of the \emph{Ptolemy group} $\text{Pt}$, on the bulk Hilbert space thus realising a toy model of the AdS/CFT correspondence.

Jones' semicontinuous limit construction has so far been unable to produce new CFTs. This is due to a number of obstructions, perhaps the most serious of which is that the resulting limit is not generically translation invariant \cites{jones_no-go_2016,klieschContinuumLimitsHomogeneous2018}. Nonetheless, the semicontinuous limit does lead to something very much resembling a CFT whose study is interesting in its own right. This is the goal of the present paper: we commence the investigation of what might be termed \emph{Thompson field theory}, explain how to define quantum field-like operators for such representations, calculate their correlation functions, and extract information resembling the conformal data. Very recent related work in this direction may be found in \cites{brothierOperatoralgebraicConstructionGauge2019,osborneContinuumLimitsQuantum2019}.

The material in this paper, which is aimed at both physicists and mathematicians, is presented at correspondingly varying levels of rigour. We consistently warn the reader throughout the main text when material is presented less rigour, for instance, by using words such as ``prototype'' or ``formally''.

Here is a brief guide to the paper:
\begin{itemize}
	\item Section~\ref{sec:preliminaries} contains a reminder of quantum lattice systems and hamiltonians for such lattice systems for both quantum spin systems and anyons in the context of the trivalent category $\mathit{SO}(3)_q$ as well as a brief reminder of Thompson's groups $F$ and $T$.
	\item Section~\ref{sec:corrfunstrees} presents the definition of \emph{ascending operators} and the calculation of the two-point correlation functions for ascending operators for finite regular binary tree states for quantum spin systems.
	\item Section~\ref{sec:renormfieldoprs} may be skipped upon first (and second) reading. This section is \emph{not} presented at a level of mathematical rigour (this is signalled by the presence of ``$\sim$'' symbols towards the end). The objective of this section is to study continuum limits of correlation functions and study what the freedom in rescalings and shifts allows in terms of the existence of a continuum limit. This section may be regarded as the principle motivation for the definitions in section~\ref{sec:primaryfields}.
	\item Section~\ref{sec:modularinvariance} contains a reminder of perfect tensors (and planar perfect tangles) and the unitary representations of Thompson's group $T$ which arise from such boxes. The important property that the ``vacuum state'' for such representations are invariant under $\mathit{PSL}(2,\mathbb{Z})$ is then deduced. This section is not without rigour, and reviews known rigourous material. To save time and space the arguments are presented here via examples.
	\item Section~\ref{sec:primaryfields} is by far the most important section of this paper and contains the definition of (quasi-) primary fields for Thompson's groups $F$ and $T$ as well as a theorem showing how to deduce the correlation functions for these fields. Everything here is rigourous except when we talk about the motivations from quantum field theory. The non-rigorous discussions in this and the following three sections are indicated as such, and no rigourous results depend on them. The results here are only written out for quantum spin systems leaving the generalisation for trivalent categories for the reader. The behaviour of these correlation functions in the continuum are depicted in a couple of representative examples. These plots highlight the core characteristic of the two-point functions for a tree in the continuum, namely, they exhibit scale invariance \emph{and} discontinuities.
	\item Section~\ref{sec:shortdistance} explores some simple yet striking corollaries of the previous section. In particular we explore the short-distance behaviour of $n$-point correlation functions. Except where indicated this section is mathematically rigourous.
	\item Section~\ref{sec:ope} introduces the prototype definition of the operator product expansion. The corresponding (in general, nonassociative) fusion ring is also introduced. Except where indicated this section is mathematically rigourous.
	\item Section~\ref{sec:thompsonaction} strengthens the analogies between Thompson-group symmetric quantum mechanics and CFT. Here the action of the Thompson group on $n$-point functions is deduced culminating in the second main result of the notes, namely (\ref{eq:nptaction}). This section is mathematically rigourous.
	\item Section~\ref{sec:example1} and Section~\ref{sec:example2} contains illustrations of the results of the notes in terms of two important examples: (i) tree states for a quantum spin system; and (ii) a lattice built on cabled $\mathit{SO}(3)_q$.
	\item Section~\ref{sec:em} contains discussion around the challenges facing the definition of a quantity analogous the the energy momentum tensor. 
	\item Appendix~\ref{app:trees} and Appendix~\ref{app:jordan} contain some notes on properties of trees and the Jordan decomposition.
\end{itemize}

\section{Preliminaries}\label{sec:preliminaries}
We work with quantum spin systems or trivalent categories throughout and illustrate results mostly for the quantum spin system built from $\mathbb{C}^d$ or the trivalent category $\mathit{SO}(3)_q$.

\subsection{Quantum spin systems}
Our quantum spin systems $\Lambda_N$ are comprised of a finite number $N$ of quantum spins with local dimension $d$. The quantum spins are assumed to be arranged on a ring. Thus lattice sites are labelled by the integers $j = 0, 1, \ldots, N-1$, with the identification $N= 0$. The total Hilbert space for a quantum spin system of $N$ such spins is therefore
\begin{equation}
	\mathcal{H}_N = (\mathbb{C}^d)^{\otimes N}.
\end{equation}
Observables for this quantum system are Hermitian operators living in
\begin{equation}
	\mathcal{A}(\Lambda_N)= \mathcal{B}(\mathcal{H}_N) = \bigotimes_{j=0}^{N-1} \mathcal{A}_j,
\end{equation}
where $\mathcal{A}_j= M_d(\mathbb{C})$ is the local on-site observable algebra for spin $j$ given by $M_d(\mathbb{C})$, the algebra of $d\times d$ complex matrices.

One can define the observable algebra for an infinite-size $D$-dimensional quantum spin system on a lattice $\mathbb{Z}^D$, $D\in\mathbb{N}$, as the following $C^*$-algebra known as the \emph{quasi-local algebra} \cite{bratteliOperatorAlgebrasQuantum1997}. Firstly, we define, corresponding to any finite subset $\Lambda \subset \mathbb{Z}^D$, the observable algebra $\mathcal{A}(\Lambda)$ to be the tensor product of $\mathcal{A}_j$ over all $j\in \Lambda$. For $\Lambda_1\subset \Lambda_2$ identify $\mathcal{A}(\Lambda_1)$ with the subalgebra $\mathcal{A}(\Lambda_1)\otimes \mathbb{I}_{\Lambda_2\setminus \Lambda_1} \subset \mathcal{A}(\Lambda_2)$. For infinite $\Lambda \subset \mathbb{Z}^D$ denote by $\mathcal{A}(\Lambda)$ the $C^*$-closure of the increasing family of finite-dimensional algebras $\mathcal{A}(\Lambda_f)$ with $\Lambda_f \subset \Lambda$. The quasi-local algebra is then $\mathcal{A}(\mathbb{Z}^D)$.

\subsection{Trivalent Categories}

Trivalent categories are algebraic structures that admit a nice and simple graphical calculus. We will explain the graphical calculus and rules necessary for calculations here and refer to the original paper for the mathematical details \cite{morrison_categories_2015}.

Consider planar trivalent graphs drawn in a rectangle, with $k=n+m$ boundary points, such that $n$ boundary points are located on the top edge of an imaginary rectangle, and $m$ on the bottom edge. We identify graphs related by isotopies that keep the graphs in their rectangles and don't change the order of the boundary points. The graphs thus defined are called diagrams, and the boundary points are referred to as open or uncontracted legs. To indicate the number of open legs on the top and bottom edge of the rectangle, we use the notation $n\to m$.

By allowing formal addition and multiplication by complex scalars, we can turn the set $\mathcal{C}(n\to m)$ of diagrams $n\to m$ into a vector space. We further define two bilinear operations. First, the composition of two diagrams $x\colon n\to m$ and $y\colon m\to l$ is given by stacking $x$ on top of $y$ and connecting the open legs in order, giving the diagram $y\circ x\colon n\to l$. Second, the tensor product of two diagrams $x\colon n\to m$ and $z\colon p\to q$ is given by drawing $x$ and $z$ side by side, giving the diagram $x\otimes z\colon n+p\to m+q$.

\begin{itemize}
	\item the empty diagram $0 \to 0$ containing no vertices and edges,
	\item the diagram $1\to 1$ consisting of a single line:
	\begin{equation}
		\begin{tikzpicture}[scale=0.5]
    	\draw (0, 0) -- (0, 2);
  	\end{tikzpicture}
	\end{equation}
	\item the two graphs $0\to 2$ and $2\to 0$ called cap and cup:
	\begin{equation}
		\begin{tikzpicture}[scale=0.5]
	    \draw (0, 0) arc [start angle=0, end angle=180, radius=1];
	  \end{tikzpicture}\qquad
		\begin{tikzpicture}[scale=0.5]
    	\draw (0, 0) arc [start angle=180, end angle=360, radius=1];
  	\end{tikzpicture}
	\end{equation}
	\item the trivalent vertex $0\to 3$, which is rotation invariant:
	\begin{equation}
		\begin{tikzpicture}[scale=0.5]
    	\draw (0, 2) -- (0, 1);
    	\draw (-1, 0) -- (0, 1) -- (1, 0);
  	\end{tikzpicture}
	\end{equation}
\end{itemize}

Writing $\mathcal{C}_k$ for $\mathcal{C}(0\to k)$, we require that the following dimensional constraints are satisfied:
\begin{equation}
	\dim\mathcal{C}_0=1,\quad \dim\mathcal{C}_1=0,\quad \dim\mathcal{C}_2=1,\quad \dim\mathcal{C}_3=1.
\end{equation}

This has important consequences:
\begin{itemize}
	\item $\dim\mathcal{C}_0=1$ means that every diagram with no open legs is a scalar multiple of the empty diagram. This leads us to identifying $\mathcal{C}_0$ with the underlying field $\mathbb{C}$ by assigning the empty diagram the value $1$. The loop is then a non-zero complex number which we call $d$:
	\begin{equation}\label{eq:loop}
		\begin{tikzpicture}[scale=0.25,baseline=-1mm]
	    \draw (0, 0) circle [radius=1];
	  \end{tikzpicture}
		=d
	\end{equation}
	\item $\dim\mathcal{C}_1=0$ means that every diagram containing the tadpole diagram
	\begin{equation}
		\begin{tikzpicture}[scale=0.25]
	    \draw (0, 0) circle [radius=1];
	    \draw (0, -3) -- (0, -1);
	  \end{tikzpicture}
	\end{equation}
	is zero.
	\item $\dim\mathcal{C}_2=1$ means that
	\begin{equation}
		\begin{tikzpicture}[scale=0.25,baseline=-1mm]
	    \draw (0, 0) circle [radius=1];
	    \draw (0, 1) -- (0, 3);
	    \draw (0, -3) -- (0, -1);
  	\end{tikzpicture}=b\ %
		\begin{tikzpicture}[scale=0.25,baseline=-1mm]
	    \draw (0, -3) -- (0, 3);
	  \end{tikzpicture}
	\end{equation}
	(the diagram on the left-hand side is also called a \emph{bigon}). We choose the normalization $b=1$.
	\item $\dim\mathcal{C}_3=1$ means that
	\begin{equation}\label{eq:triangle}
		\begin{tikzpicture}[scale=0.5,baseline=4mm]
			\draw (0, 2.3) -- (0, 1.5);
	    \draw (-1, 0) -- (-0.5, 0.5);
			\draw (1, 0) -- (0.5, 0.5);
			\draw (0, 1.5) -- (-0.5, 0.5) -- (0.5, 0.5) -- (0, 1.5);
	  \end{tikzpicture}
		= t\, \begin{tikzpicture}[scale=0.5,baseline=4mm]
		\draw (0, 2.3) -- (0, 1);
		\draw (-1, 0) -- (0, 1) -- (1, 0);
	  \end{tikzpicture},
	\end{equation}
	where $t$ is another parameter.
\end{itemize}

Now we make the main assumption: We assume that all elements of $\mathcal{C}(n\to m)$, for any choice of $n$ and $m$, are generated from the basic diagrams using only composition and tensoring.

The relations established in equations (\ref{eq:loop}) to (\ref{eq:triangle}) effectively provide rules to simplify diagrams containing loops, tadpoles, bigons, and triangles, which can be considered as $n$-gons with $n=0, 1, 2, 3$, respectively.

$\mathit{SO}(3)_q$ is the case when $d$ and $t$ satisfy
\begin{equation}
	d+t-d t-2=0,
\end{equation}
and we assume that $d$ is not the golden ratio. In this case we have $\dim\mathcal{C}_4=3$.

\subsection{\texorpdfstring{Thompson's Groups $F$ and $T$}{Thompson's Groups F and T}}
In this subsection we review the definition of Thompson's groups $F$ and $T$. The canonical reference here is \cite{cannon_introductory_1996}. The reader may also find the thesis \cite{belk_thompsons_2007} of Belk to be helpful.

We begin by introducing some definitions. Let $(X,\mathcal{T})$ be a topological space. A \emph{partition} $P$ of $X$ is a collection of disjoint nonempty subsets of $X$ whose union is $X$. Let $P$ and $Q$ be two partitions of $X$, we say that $Q$ is a \emph{refinement} of $P$, denoted $P\preceq Q$, if every element of $Q$ is a subset of an element of $P$. (A useful mnemonic to remember the ordering is that $Q$ has more elements than $P$.) Let $\mathcal{P}$ be a set of partitions of $X$. This is a partially ordered set according to the natural partial order $\preceq$ arising from refinement. Our focus in this paper will be on sets $\mathcal{D}$ of partitions which are \emph{directed} by $\preceq$.

We denote by $\homeo(X)$ the group of homeomorphisms of $X$. Homeomorphisms act in a natural way on partitions: let $f\in\homeo(X)$ and given a partition $P = \{U_1, U_2, \ldots, U_n\}$ of $X$ we obtain another partition of $X$ via
\begin{equation}
	f(P) = \{f(U_j)\,|\, U_j\in P \}.
\end{equation}
This action preserves the partial order $\preceq$.

\begin{definition}
Let $\mathcal{D}$ be a directed set of partitions of $X$. Suppose that $f\in \homeo(X)$. We say that $P\in\mathcal{D}$ is \emph{good} for $f$ if $f(Q) \in \mathcal{D}$, for all $P \preceq Q$, otherwise $P$ is \emph{bad}.
\end{definition}

\begin{example}
	Let $X = [0,1]$ with the usual euclidean topology. Consider the directed set $\mathcal{D}$ of partitions of $[0,1]$ given by intervals with standard dyadic rational endpoints, i.e., each interval (apart from the last) has the form $[a,b)$ where $a = \frac{m}{2^n}$ and $b = \frac{m+1}{2^n}$, with $m,n\in\mathbb{Z}^+$. The final interval is always of the form $[a,1]$ with $a = \frac{m}{2^n}$. Let $f$ be the piecewise linear function given by
	\begin{center}
	\begin{tikzpicture}[scale=3.3]
		\begin{scope}[shift={(1.5,0)}]
			\draw[lightgray] (0.875,0) -- (0.875,1);
			\draw[lightgray] (0.75,0) -- (0.75,1);
			\draw[lightgray] (0.5,0) -- (0.5,1);
			\draw[lightgray] (0, 0.5) -- (1, 0.5);
			\draw[lightgray] (0, 0.625) -- (1, 0.625);
			\draw[lightgray] (0, 0.75) -- (1, 0.75);
			\node[below] at (0.5,-0.07) {$x$};
			\node[below] (a) at (0, -0.05) {$0$};
			\node[below] (b) at (1, -0.05) {$1$};
			\node[left] (d) at (-0.05, 0) {$0$};
			\node[left] (e) at (-0.05, 1) {$1$};
			\draw[lightgray,line width=0.5] (0, 0) rectangle (1,1);
			\draw[->] (0,-0.05) -- (0,1.05);
			\draw[->] (-0.05,0) -- (1.05,0);
			\draw (0, 0) -- (0.5,0.5) -- (0.75,0.625) -- (0.875,0.75) -- (1,1);
		\end{scope}
	\end{tikzpicture}
\end{center}
	and consider the partition $P\in\mathcal{D}$ of the form $P = \{[0,\tfrac{1}{2}), [\tfrac{1}{2},1]\}$. While it may seem that $P$ is good for $f$ because $f(P) = P$, this is not actually the case because the refinement $Q = \{[0,\tfrac{1}{2}), [\tfrac{1}{2},\tfrac{3}{4}), [\tfrac{3}{4},1]\}$, with $P\preceq Q$, is mapped to $f(Q) = \{[0,\tfrac{1}{2}), [\tfrac{1}{2},\tfrac{5}{8}), [\tfrac{5}{8},1]\}$, which is \emph{not} an element of $\mathcal{D}$.
\end{example}

\begin{definition}
	We say that a homeomorphism $f\in\homeo(X)$ is \emph{compatible} with a directed set $\mathcal{D}$ of partitions of $X$ if there exists a partition $P\in \mathcal{D}$ good for $f$.
\end{definition}

\begin{definition}
	We call by \emph{Thompson's group $F$} the group of piecewise linear homeomorphisms from $[0,1]$ to itself which are differentiable except at finitely many dyadic rational numbers and such that on the differentiable intervals the derivatives are powers of $2$.
\end{definition}

\begin{remark}
	That $F$ is indeed a group follows from the following observations. Let $f\in F$. Because the derivative of $f$, where it is defined, is always positive it preserves the orientation of $[0,1]$. Suppose that $0 = x_0 < x_1 < \cdots < x_n = 1$ be the points where $f$ is not differentiable. Then
	\begin{equation}
		f(x) = \begin{cases}
			a_1x, \quad & x_0\le x \le x_1, \\
			a_2x+ b_2, \quad & x_1\le x \le x_2, \\
			&\vdots  \\
			a_nx+ b_n, \quad & x_{n-1}\le x \le x_n,
		\end{cases}
	\end{equation}
	where, for all $j=1, 2, \ldots, n$, $a_j$ is a power of two and $b_j$ is a dyadic rational (where we set $b_1 = 0$). The inverse $f^{-1}$ also has power-of-two derivatives except on dyadic rational points and, since $f$ maps the set of dyadic rationals to itself, we deduce that $F$ is a group under composition.
\end{remark}

\begin{example}
	Consider $X = [0,1]$ with the usual euclidean topology and let $\mathcal{D}$ be the directed set of partitions of $[0,1]$ via intervals with standard dyadic rational endpoints. Then Thompson's group $F$ is compatible with $\mathcal{D}$.
\end{example}

An alternative equivalent characterisation \cite{cannon_introductory_1996} of Thompson's group $F$ is that it is the group of homeomorphisms of $[0,1]$ generated by the piecewise linear functions $A(x)$ and $B(x)$ below, along with their inverses, under composition.
\begin{center}
	\begin{tikzpicture}[scale=3.3]
		\begin{scope}[shift={(0,0)}]
			\draw[lightgray] (0.5,0) -- (0.5,1);
			\draw[lightgray] (0.75,0) -- (0.75,1);
			\draw[lightgray] (0, 0.25) -- (1, 0.25);
			\draw[lightgray] (0, 0.5) -- (1, 0.5);
			\node[below] at (0, -0.05) {$0$};
			\node[below] at (1, -0.05) {$1$};
			\node[left] at (0.0,0.5) {$A(x)$};
			\node[below] at (0.5,-0.07) {$x$};
			\node[left] at (-0.05, 0) {$0$};
			\node[left] at (-0.05, 1) {$1$};
			\draw[lightgray,line width=0.5] (0, 0) rectangle (1,1);
			\draw[->] (0,-0.05) -- (0,1.05);
			\draw[->] (-0.05,0) -- (1.05,0);
			\draw (0, 0) -- (0.5,0.25) -- (0.75,0.5) -- (1,1);
		\end{scope}
		\begin{scope}[shift={(1.5,0)}]
			\draw[lightgray] (0.875,0) -- (0.875,1);
			\draw[lightgray] (0.75,0) -- (0.75,1);
			\draw[lightgray] (0.5,0) -- (0.5,1);
			\draw[lightgray] (0, 0.5) -- (1, 0.5);
			\draw[lightgray] (0, 0.625) -- (1, 0.625);
			\draw[lightgray] (0, 0.75) -- (1, 0.75);
			\node[left] at (0.0,0.5) {$B(x)$};
			\node[below] at (0.5,-0.07) {$x$};
			\node[below] (a) at (0, -0.05) {$0$};
			\node[below] (b) at (1, -0.05) {$1$};
			\node[left] (d) at (-0.05, 0) {$0$};
			\node[left] (e) at (-0.05, 1) {$1$};
			\draw[lightgray,line width=0.5] (0, 0) rectangle (1,1);
			\draw[->] (0,-0.05) -- (0,1.05);
			\draw[->] (-0.05,0) -- (1.05,0);
			\draw (0, 0) -- (0.5,0.5) -- (0.75,0.625) -- (0.875,0.75) -- (1,1);
		\end{scope}
	\end{tikzpicture}
\end{center}

Let $S^1$ be the circle given by $[0,1]$ with $1$ identified with $0$.
\begin{definition}
	We call by \emph{Thompson's group $T$} the group of all piecewise linear homeomorphisms from $S^1$ to itself which take dyadic rational coordinates to dyadic rational coordinates and which are differentiable on $S^1$ except at finitely many dyadic rational coordinates such that when the function is differentiable its derivative is a power of $2$.
\end{definition}

Just as for $F$ one can equivalently realise $T$ as the group generated by $A(x)$ and $B(x)$ above along with $C(x)$ defined below.
\begin{center}
	\begin{tikzpicture}[scale=3.3]
		\draw[lightgray] (0.5,0) -- (0.5,1);
		\draw[lightgray] (0.75,0) -- (0.75,1);
		\draw[lightgray] (0, 0.75) -- (1, 0.75);
		\draw[lightgray] (0, 0.5) -- (1, 0.5);
		\node[below] at (0, -0.05) {$0$};
		\node[below] at (1, -0.05) {$1$};
		\node[left] at (0.0,0.5) {$C(x)$};
		\node[below] at (0.5,-0.07) {$x$};
		\node[left] at (-0.05, 0) {$0$};
		\node[left] at (-0.05, 1) {$1$};
		\draw[lightgray,line width=0.5] (0, 0) rectangle (1,1);
		\draw[->] (0,-0.05) -- (0,1.05);
		\draw[->] (-0.05,0) -- (1.05,0);
		\draw (0, 0.75) -- (0.5, 1);
		\draw (0.5, 0) -- (0.75, 0.5) -- (1, 0.75);
	\end{tikzpicture}
\end{center}

Finally, for Jones' representations of Thompson's groups $F$ and $T$ we will need another characterization, namely as the group of fractions of binary trees. A standard dyadic partition of $[0, 1]$ can be represented by a finite rooted binary tree in which vertices stand for standard dyadic intervals, and the two children of a vertex represent the two subintervals. For example, the partition
\begin{equation}
	\begin{tikzpicture}[scale=5]
		\def\h{0.025}
		\draw (0, 0) -- (1, 0);
		\draw (0, -2*\h) -- ++(0, 4*\h)  node[at start,below=2pt] {$0$};
		\draw (0.25, -\h) -- ++(0, 2*\h) node[at start,below=3pt] {$\frac{1}{4}$};
		\draw (0.375, -\h) -- ++(0, 2*\h) node[at start,below=3pt] {$\frac{3}{8}$};
		\draw (0.5, -\h) -- ++(0, 2*\h) node[at start,below=3pt] {$\frac{1}{2}$};
		\draw (1, -2*\h) -- ++(0, 4*\h) node[at start,below=2pt] {$1$};
	\end{tikzpicture}
\end{equation}
is represented by the tree
\begin{equation}
	\begin{tikzpicture}[scale=0.5]
		\draw (2.0, 0.0) -- (3.0, 1.0) -- (4.0, 0.0);
		\draw (0.0, 0.0) -- (2.0, 2.0) -- (3.0, 1.0);
		\draw (2.0, 2.0) -- (3.0, 3.0) -- (6.0, 0.0);
	\end{tikzpicture}\ .
\end{equation}
Now we consider pairs $(s, t)$ of trees in which both trees have the same number of leaves. We write them in the form
\begin{equation}
	\begin{tikzpicture}[scale=0.25]
		\draw (2.0, 0.0) -- (3.0, 1.0) -- (4.0, 0.0);
		\draw (0.0, 0.0) -- (2.0, 2.0) -- (3.0, 1.0);
		\draw (2.0, 2.0) -- (3.0, 3.0) -- (6.0, 0.0);

		\draw[yshift=-30pt] (-0.5, 0) -- (6.5, 0);

		\begin{scope}[yscale=-1, yshift=60pt]
			\draw (0.0, 0.0) -- (1.0, 1.0) -- (2.0, 0.0);
			\draw (4.0, 0.0) -- (5.0, 1.0) -- (6.0, 0.0);
			\draw (1.0, 1.0) -- (3.0, 3.0) -- (5.0, 1.0);
		\end{scope}
	\end{tikzpicture}
\end{equation}
and consider them as \emph{fractions} in the following way. Two fractions are called equivalent if on can be obtained from the other by removing a pair of opposite \tikz[scale=0.2]{\draw (0, 0) -- (1, 1) -- (2, 0);} carets. For example, the following two fractions are equivalent since the two carets of the left-hand fraction marked bold can be removed to give the fraction on the right-hand side:
\begin{equation}
	\begin{tikzpicture}[scale=0.25,baseline=-3.5mm]
		\draw[very thick] (2.0, 0.0) -- (3.0, 1.0) -- (4.0, 0.0);
		\draw (0.0, 0.0) -- (2.0, 2.0) -- (3.0, 1.0);
		\draw (2.0, 2.0) -- (3.0, 3.0) -- (6.0, 0.0);

		\draw[yshift=-30pt] (-0.5, 0) -- (6.5, 0);

		\begin{scope}[yscale=-1, yshift=60pt]
			\draw[very thick] (2.0, 0.0) -- (3.0, 1.0) -- (4.0, 0.0);
			\draw (3.0, 1.0) -- (4.0, 2.0) -- (6.0, 0.0);
			\draw (0.0, 0.0) -- (3.0, 3.0) -- (4.0, 2.0);
		\end{scope}
	\end{tikzpicture}\,\sim\,
	\begin{tikzpicture}[scale=0.25,baseline=-3.5mm]
		\draw (0.0, 0.0) -- (1.0, 1.0) -- (2.0, 0.0);
		\draw (1.0, 1.0) -- (2.0, 2.0) -- (4.0, 0.0);

		\draw[yshift=-30pt] (-0.5, 0) -- (4.5, 0);

		\begin{scope}[yscale=-1, yshift=60pt]
			\draw (2.0, 0.0) -- (3.0, 1.0) -- (4.0, 0.0);
			\draw (0.0, 0.0) -- (2.0, 2.0) -- (3.0, 1.0);
		\end{scope}
	\end{tikzpicture}
\end{equation}
We denote the equivalence class of a pair $(s, t)$ of trees by $[s, t]$. Next, we define a multiplication operation for (equivalence classes of) fractions. Given two fractions $(s_1, t_1)$ and $(s_2, t_2)$, we can add pairs of carets to obtain fractions $(s_1', t_1')\sim (s_1, t_1)$ and $(s_2', t_2')\sim (s_2, t_2)$ such that $(t_1'=s_2')$. The result of the multiplication is then taken to be $(s_1', t_2')$, and it can be shown that the equivalence class $[s_1', t_2']$ of the result only depends on the \emph{equivalence classes} of the two input fractions, or on the particular choice of the $s_i'$, $t_i'$.

Equivalence classes of fractions form a group under this multiplication (with inverses given by swapping numerators and denominators). This group is isomorphic to Thompson's group $F$. By replacing trees with annular trees (i.e., trees placed on the circle), we get a similar description of Thompson's group $T$.

\section{Correlation functions for tree tensor networks for quantum spin systems}\label{sec:corrfunstrees}
In this section we discuss how to compute correlation functions for tree states defined for \emph{quantum spin systems}. These methods generalise without change to the anyon case modelled by a trivalent category.

Our focus in this section is on the case where
\begin{equation}
	\mathcal{H}_N = (\mathbb{C}^d)^{\otimes N}.
\end{equation}
To define a tree tensor network in such a setting, we need an isometry $V\colon\mathbb{C}^d\to\mathbb{C}^d\otimes\mathbb{C}^d$, which can be represented graphically as a trivalent vertex, or ``$3$-box'',
\begin{equation}
    V \cong \tikz[baseline=8pt, scale=0.75]{
      \draw (0.0, 0.0) -- (0.5, 0.5) -- (1.0, 0.0);
      \draw (0.5, 0.5) -- (0.5, 1.0);
    }\, .
\end{equation}
It considerably simplifies our discussion to assume that
\begin{equation}
	V\textsc{swap} = V,
\end{equation}
where
\begin{equation}
	\textsc{swap}|\phi\rangle|\psi\rangle = |\psi\rangle|\phi\rangle, \quad \forall |\phi\rangle, |\psi\rangle\in \mathbb{C}^d,
\end{equation}
that is,
\begin{equation}
    \tikz[baseline=18pt, scale=0.75]{
      \draw (0.0, 1.0) -- (0.5, 1.5) -- (1.0, 1.0);
      \draw (0.5, 1.5) -- (0.5, 2.0);
      \draw (0.0, 1.0) -- (1.0, 0.0);
      \draw (1.0, 1.0) -- (0.0, 0.0);
    }  = \tikz[baseline=8pt, scale=0.75]{
      \draw (0.0, 0.0) -- (0.5, 0.5) -- (1.0, 0.0);
      \draw (0.5, 0.5) -- (0.5, 1.0);
    }\, ,
\end{equation}
where the cross stands for the $\textsc{swap}$.

Using the $3$-box $V$ we set up the linear map $\mathcal{E}\colon M_d(\mathbb{C}) \rightarrow M_d(\mathbb{C})$ as
\begin{equation}
	\mathcal{E}(X) = V^\dag (X\otimes \mathbb{I}) V,
\end{equation}
which can be represented graphically as
\begin{equation}
	\mathcal{E}(X) =
	\tikz[baseline=28.34pt, scale=0.75]{
      \draw (0.0, 1.0) -- (0.0, 1.2);
      \draw (0.0, 1.5) circle [radius=0.3] node {$X$};
      \draw (0.0, 1.8) -- (0.0, 2.0);
      \draw (0.0, 1.0) -- (0.5, 0.5) -- (1.0, 1.0) -- (1.0, 2.0);
      \draw (0.5, 0.0) -- (0.5, 0.5);
      \draw (0.0, 2.0) -- (0.5, 2.5) -- (1.0, 2.0);
      \draw (0.5, 2.5) -- (0.5, 3.0);
    }\, .
\end{equation}

The map $\mathcal{E}\colon M_d(\mathbb{C}) \rightarrow M_d(\mathbb{C})$ is, by construction, \emph{completely positive}, and admits a Jordan normal form. However, we are going to simplify things and assume further that $\mathcal{E}$ is diagonalisable, so we can obtain right eigenvalues $\lambda_\alpha$ and eigenvectors $\mu^\alpha\in M_d(\mathbb{C})$:
\begin{equation}
	\mathcal{E}(\mu^{\alpha}) =
	\tikz[baseline=28.34pt, scale=0.75]{
      \draw (0.0, 1.0) -- (0.0, 1.1);
      \draw (0.0, 1.5) circle [radius=0.4] node {$\mu^\alpha$};
      \draw (0.0, 1.9) -- (0.0, 2.0);
      \draw (0.0, 1.0) -- (0.5, 0.5) -- (1.0, 1.0) -- (1.0, 2.0);
      \draw (0.5, 0.0) -- (0.5, 0.5);
      \draw (0.0, 2.0) -- (0.5, 2.5) -- (1.0, 2.0);
      \draw (0.5, 2.5) -- (0.5, 3.0);
    } = \lambda_\alpha \tikz[baseline=28.34pt, scale=0.75]{
      \draw (0.0, 0.0) -- (0.0, 1.1);
      \draw (0.0, 1.5) circle [radius=0.4] node {$\mu^\alpha$};
      \draw (0.0, 1.9) -- (0.0, 3.0);
    } = \lambda_\alpha \mu^\alpha, \quad \alpha = 0,1,\ldots, d^2-1.
\end{equation}
Without loss of generality we assume that $\lambda_0 = 1$ and $\mu^0 = \mathbb{I}$. It is worth emphasising that the left eigenvectors $\nu^\alpha$, $\alpha = 0, 1, \ldots, d^2-1$, furnish us with a way to expand -- with respect to the Hilbert-Schmidt inner product $(X,Y) = \frac1d\tr(X^\dag Y)$ -- an operator $M \in M_d(\mathbb{C}^d)$ with respect to $\mu^\alpha$ (see Appendix~\ref{app:jordan} for further details):
\begin{equation}
	M = \sum_{\alpha=0}^{d^2-1} (\nu^\alpha,M) \mu^\alpha.
\end{equation}
The operators $\mu^\alpha$ are called the \emph{scaling} or \emph{ascending operators}.

Now that we have the eigenvalues and eigenvectors of $\mathcal{E}$, we introduce the \emph{fusion map}
\begin{equation}
	\mathcal{F}\colon M_d(\mathbb{C})\times M_d(\mathbb{C}) \rightarrow M_d(\mathbb{C}),
\end{equation}
via
\begin{equation}
	\mathcal{F}(X, Y) = V^\dag(X\otimes Y)V,
\end{equation}
which is diagrammatically given as
\begin{equation}
	\mathcal{F}(X, Y) =\tikz[baseline=28.34pt, scale=0.75]{
      \draw (0.0, 1.0) -- (0.0, 1.2);
      \draw (0.0, 1.5) circle [radius=0.3] node {$X$};
      \draw (0.0, 1.8) -- (0.0, 2.0);
      \draw (1.0, 1.0) -- (1.0, 1.2);
      \draw (1.0, 1.5) circle [radius=0.3] node {$Y$};
      \draw (1.0, 1.8) -- (1.0, 2.0);
      \draw (0.0, 1.0) -- (0.5, 0.5) -- (1.0, 1.0);
      \draw (0.5, 0.0) -- (0.5, 0.5);
      \draw (0.0, 2.0) -- (0.5, 2.5) -- (1.0, 2.0);
      \draw (0.5, 2.5) -- (0.5, 3.0);
    }\, .
\end{equation}
The \emph{fusion coefficients} for this map are given by
\begin{equation}
	{f^{\alpha\beta}}_\gamma = \frac{1}{d}\tr\left((\nu^\gamma)^\dag \mathcal{F}(\mu^\alpha, \mu^\beta)\right)
\end{equation}
so that
\begin{equation}
	\mathcal{F}(\mu^\alpha, \mu^\beta) = \sum_{\gamma} {f^{\alpha\beta}}_\gamma \mu^\gamma.
\end{equation}
The fusion coefficients may be regarded as the structure constants for an, in general, non-associate and non-commutative algebra built on the indices $\alpha$, and
\begin{equation}
	\alpha \star \beta = \sum_{\gamma} {f^{\alpha\beta}}_\gamma \gamma.
\end{equation}

Let $\Omega_N$, $N=2^m$, be the tree state defined via
\begin{center}
	\begin{tikzpicture}[scale=0.35]
    \draw (0.0, 0.0) -- (1.0, 1.0) -- (2.0, 0.0);
    \draw (4.0, 0.0) -- (5.0, 1.0) -- (6.0, 0.0);
    \draw (1.0, 1.0) -- (3.0, 3.0) -- (5.0, 1.0);
    \draw (8.0, 0.0) -- (9.0, 1.0) -- (10.0, 0.0);
    \draw (12.0, 0.0) -- (13.0, 1.0) -- (14.0, 0.0);
    \draw (9.0, 1.0) -- (11.0, 3.0) -- (13.0, 1.0);
    \draw (3.0, 3.0) -- (7.0, 7.0) -- (11.0, 3.0);
    \draw (16.0, 0.0) -- (17.0, 1.0) -- (18.0, 0.0);
    \draw (20.0, 0.0) -- (21.0, 1.0) -- (22.0, 0.0);
    \draw (17.0, 1.0) -- (19.0, 3.0) -- (21.0, 1.0);
    \draw (24.0, 0.0) -- (25.0, 1.0) -- (26.0, 0.0);
    \draw (28.0, 0.0) -- (29.0, 1.0) -- (30.0, 0.0);
    \draw (25.0, 1.0) -- (27.0, 3.0) -- (29.0, 1.0);
    \draw (19.0, 3.0) -- (23.0, 7.0) -- (27.0, 3.0);
    \draw (7.0, 7.0) -- (15.0, 15.0) -- (23.0, 7.0);

    \draw[decoration={brace, amplitude=10}, decorate] (-1.5, -0.5) -- node[left=12pt] {$m$ levels} (-1.5, 15.5);
    \draw[decoration={brace, mirror, amplitude=10}, decorate] (-0.5, -1) -- node[below=12pt] {$N=2^m$ leaves} (30.5, -1);
  \end{tikzpicture}
\end{center}
with $V$ as the trivalent vertex. We first show how to compute one-point correlation functions for the ascending operators, i.e., the expectation values
\begin{equation}
	\langle \mu_j^{\alpha}\rangle_N = \tr(\Omega_N (\mathbb{I}_0\otimes \cdots \otimes \mathbb{I}_{j-1}\otimes \mu_j^\alpha \otimes \mathbb{I}_{j+1}\otimes \cdots \otimes \mathbb{I}_{N-1})).
\end{equation}
This calculation is expedited upon noting that
\begin{equation}\label{eq:expval}
	\langle \mu_j^{\alpha}\rangle_N = \frac1d (\lambda_{\alpha})^{m-1} \tr(\mu^\alpha) = (\lambda_{\alpha})^{m-1} (\mathbb{I},\mu^\alpha).
\end{equation}
For example, for $m=3$, $N=2^3=8$, and $j=2$, we have
\begin{center}
	\begin{tikzpicture}[scale=0.5]
		\draw (0.0, 0.0) -- (1.0, 1.0) -- (2.0, 0.0);
		\draw (4.0, 0.0) -- (5.0, 1.0) -- (6.0, 0.0);
		\draw (1.0, 1.0) -- (3.0, 3.0) -- (5.0, 1.0);
		\draw (8.0, 0.0) -- (9.0, 1.0) -- (10.0, 0.0);
		\draw (12.0, 0.0) -- (13.0, 1.0) -- (14.0, 0.0);
		\draw (9.0, 1.0) -- (11.0, 3.0) -- (13.0, 1.0);
		\draw (3.0, 3.0) -- (7.0, 7.0) -- (11.0, 3.0);

		\foreach \i in {0,1,3,4,5,6,7}{
			\draw (2*\i, 0) -- ++(0, -3.4);
		}

		\draw (4, 0) -- (4, -1);
		\draw (4, -1.7) circle[radius=0.7] node {$\mu^\alpha$};
		\draw (4, -2.4) -- (4, -3.4);

		\begin{scope}[yscale=-1,yshift=3.4cm]
			\draw (0.0, 0.0) -- (1.0, 1.0) -- (2.0, 0.0);
			\draw (4.0, 0.0) -- (5.0, 1.0) -- (6.0, 0.0);
			\draw (1.0, 1.0) -- (3.0, 3.0) -- (5.0, 1.0);
			\draw (8.0, 0.0) -- (9.0, 1.0) -- (10.0, 0.0);
			\draw (12.0, 0.0) -- (13.0, 1.0) -- (14.0, 0.0);
			\draw (9.0, 1.0) -- (11.0, 3.0) -- (13.0, 1.0);
			\draw (3.0, 3.0) -- (7.0, 7.0) -- (11.0, 3.0);
		\end{scope}
	\end{tikzpicture}
\end{center}

Building on this we next focus on the goal of computing the two-point correlation function
\begin{equation}
	\langle \mu^\alpha_j \mu^\beta_k\rangle_N = \tr(\Omega_N [\mathbb{I}_0\otimes \cdots \otimes \mathbb{I}_{j-1}\otimes \mu_j^\alpha \otimes \mathbb{I}_{j+1}\otimes \cdots \otimes \mathbb{I}_{k-1}\otimes \mu^{\beta}_k \otimes \mathbb{I}_{k+1}\otimes \cdots \otimes \mathbb{I}_{N-1}]),
\end{equation}
for $0\le j<k < 2^m-1$. To this end we label the leaves of the regular binary tree $\mathcal{T}_m$ having $2^m$ leaves with binary expansions as per Appendix~\ref{app:trees}. In this way we associate to the $j$-th vertex, $j\in \{0,1,\ldots, 2^m-1\}$, the number $x= j/{2^m}\in [0,1]$. We write
\begin{equation}
	C^{\alpha\beta}_m(x,y) = \langle \mu^\alpha_j \mu^\beta_k\rangle_N,
\end{equation}
where $x = j/{2^m}$ and $y = k/{2^m}$.

The key to computing two-point correlation functions is to note that we can relate $C^{\alpha\beta}(x,y)$ to $C^{\alpha\beta}(x^{(1)},y^{(1)})$, where we've employed the notation $x^{(1)}$ in Appendix~\ref{app:trees} for the number whose binary expansion has one fewer digit than that for $x$. From Appendix~\ref{app:trees} we also take the tree metric $d_T$.

\begin{lemma}
	Let $x, y\in [0,1)$ be two points with $m$-digit binary expansions
	\begin{equation}
		x = 0.x_{-1}\cdots x_{-m}, \qquad y = 0.y_{-1}\cdots y_{-m}.
	\end{equation}
	 Then, writing $d_T(x,y) = k+1$, we have that
	\begin{equation}
		C^{\alpha\beta}_m(x,y) = (\lambda_{\alpha}\lambda_{\beta})^k C^{\alpha\beta}_{m-k}(x^{(k)},y^{(k)}).
	\end{equation}
\end{lemma}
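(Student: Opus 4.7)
The plan is to reduce the bra-ket diagram for $C^{\alpha\beta}_m(x,y)$ level by level, using the isometry condition $V^\dag V = \mathbb{I}$ to prune subtrees whose only inserted operators are identities. First I would write $\langle \mu^\alpha_j \mu^\beta_k\rangle_N$ as the tensor network formed by contracting $\Omega_N$ (the ket) with its adjoint, with $\mu^\alpha$ inserted at leaf $j$, $\mu^\beta$ at leaf $k$, and $\mathbb{I}$ at every other leaf of the doubled network.

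Next I would show, by induction on subtree size, that any maximal subtree of the doubled network whose leaves all carry $\mathbb{I}$ collapses to the identity. The base case (a single $V/V^\dag$ pair sandwiching two identities on its leaves) is $V^\dag V = \mathbb{I}$, and the inductive step applies the same identity at the root of the subtree after flattening its child subtrees via the induction hypothesis. This pruning removes every subtree that does not contain leaf $j$ or leaf $k$, reducing the diagram to a ``Y-shape'': two descending paths from the lowest common ancestor $v$ of $j,k$ down to the two inserted operators, closed off at the top by the collapse of the trunk above $v$.

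The crucial algebraic observation is that at each of the $k$ vertices along the descending path toward leaf $j$, one of the two child subtrees carries only identities and collapses to $\mathbb{I}$; the remainder of the vertex is precisely the ascending channel $\mathcal{E}$ acting on the operator being propagated. Since $\mu^\alpha$ is a right eigenvector of $\mathcal{E}$ with eigenvalue $\lambda_\alpha$, each such step multiplies by $\lambda_\alpha$ while replacing the inserted operator by the same $\mu^\alpha$ one level up. After $k$ ascents one has $(\lambda_\alpha)^k\mu^\alpha$ sitting at the leaf whose binary label is $x^{(k)}$; the analogous argument on the other branch yields $(\lambda_\beta)^k\mu^\beta$ at $y^{(k)}$. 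What remains is precisely the two-point function $C^{\alpha\beta}_{m-k}(x^{(k)},y^{(k)})$ on a regular binary tree of height $m-k$, multiplied by the overall factor $(\lambda_\alpha\lambda_\beta)^k$.

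I expect the main obstacle to be notational rather than conceptual: one must match the appendix's convention $d_T(x,y) = k+1$ to the geometric fact that after $k$ ascents, $x^{(k)}$ and $y^{(k)}$ are the labels of sibling leaves in the reduced tree of height $m-k$, equivalently that the common prefix of $x$ and $y$ has length $m-k-1$. Once this correspondence is fixed, the recursion terminates cleanly at the base case of siblings, where the statement reduces to the direct definition of $C^{\alpha\beta}_{m-k}$ applied to a height-one subtree, and no further reduction is possible.
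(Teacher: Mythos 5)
Your argument is correct and is essentially the paper's own (only sketched) proof made explicit: the paper likewise applies $\mathcal{E}$ independently to the two operators, picking up a factor $\lambda_\alpha\lambda_\beta$ per level, and iterates $k$ times until the operators meet at a common caret, where $x^{(k)}$ and $y^{(k)}$ label sibling leaves. The one step you leave implicit is that absorbing the identity-carrying sibling subtree into a vertex yields the \emph{same} channel $\mathcal{E}$ whether the propagated operator sits on the left or the right leg of $V$ --- this is precisely where the standing assumption $V\,\textsc{swap}=V$ is used.
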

\begin{proof}
	The details are omitted because the basic argument is a relatively straightforward induction. Since $\mu^\alpha_j$ and $\mu^\beta_k$ are on different carets of the tree we can independently apply $\mathcal{E}$ to these operators to relate the expectation values at different levels:
	\begin{equation}
		C^{\alpha\beta}_m(x,y) = \langle \mu^\alpha_j \mu^\beta_k\rangle_N = \langle \mathcal{E}(\mu^\alpha_j) \mathcal{E}(\mu^\beta_k)\rangle_{N/2} = (\lambda_{\alpha}\lambda_{\beta})C^{\alpha\beta}_{m-1}(x^{(1)},y^{(1)}).
	\end{equation}
	The two operators only meet at a caret at level $m-k$, so we can repeat the above process $k$ times.
\end{proof}

Now we study what happens when $d_T(x,y) = 1$. In this case we can relate the two-point correlation function to an expectation value.
\begin{lemma}
	Suppose we have two leaves labelled $x$ and $y$ and that $d_T(x,y) = 1$. Then
	\begin{equation}
		C^{\alpha\beta}_m(x,y) = \tr(\Omega_{N/2}\mathcal{F}(\mu^\alpha, \mu^\beta)).
	\end{equation}
	Exploiting the formula (\ref{eq:expval}) for the expectation value we hence obtain
	\begin{equation}
		C^{\alpha\beta}_m(x,y)  = \frac1d  \sum_{\gamma = 0}^{d^2-1} {f^{\alpha\beta}}_\gamma (\lambda_{\gamma})^{m-2} \tr(\mu^\gamma).
	\end{equation}
\end{lemma}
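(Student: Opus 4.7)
The plan is to exploit the diagrammatic reading of $d_T(x,y) = 1$: by the definition of the tree metric this means that the leaves labelled $x$ and $y$ are siblings, i.e., they descend from the two bottom legs of a single $3$-box $V$ at level $m-1$. Therefore, inside the tensor network representation of $\tr(\Omega_N \cdots)$, the two insertions $\mu^\alpha$ and $\mu^\beta$ sit directly below the same vertex, and the local piece of the diagram is exactly $V^\dag(\mu^\alpha\otimes\mu^\beta)V = \mathcal{F}(\mu^\alpha,\mu^\beta)$. (The assumption $V\,\textsc{swap}=V$ removes any ambiguity about which leaf carries which operator.)

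First I would make this reduction precise: contracting the bottom caret and its two operators collapses two leaves into a single effective leaf carrying the operator $\mathcal{F}(\mu^\alpha,\mu^\beta)$. What remains is the identical network computing a one-point function for a regular binary tree state on $N/2 = 2^{m-1}$ leaves, with $\mathcal{F}(\mu^\alpha,\mu^\beta)$ inserted at the position of the parent caret. This yields the first displayed equality
\begin{equation*}
C^{\alpha\beta}_m(x,y) = \tr\bigl(\Omega_{N/2}\,\mathcal{F}(\mu^\alpha,\mu^\beta)\bigr).
\end{equation*}

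Next I would expand $\mathcal{F}(\mu^\alpha,\mu^\beta) = \sum_\gamma {f^{\alpha\beta}}_\gamma\, \mu^\gamma$ using the definition of the fusion coefficients, and then apply linearity of the trace together with the one-point formula (\ref{eq:expval}) at level $m-1$, which gives $\langle \mu^\gamma\rangle_{N/2} = \tfrac{1}{d}(\lambda_\gamma)^{m-2}\tr(\mu^\gamma)$. Summing over $\gamma$ produces exactly
\begin{equation*}
C^{\alpha\beta}_m(x,y) = \frac{1}{d}\sum_{\gamma=0}^{d^2-1} {f^{\alpha\beta}}_\gamma\, (\lambda_\gamma)^{m-2}\,\tr(\mu^\gamma),
\end{equation*}
as claimed.

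I expect the only nontrivial step to be the first one, namely justifying the diagrammatic collapse of the sibling caret into the fusion map. Since the rest of the tree is untouched by this local move, and since $V\,\textsc{swap}=V$ ensures the result is independent of the left/right ordering at the caret, the argument is essentially a single local rewrite in the tensor network; everything thereafter is just the expansion in the basis $\{\mu^\gamma\}$ and an application of the already-established one-point formula.
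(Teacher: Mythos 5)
Your proposal is correct and follows exactly the route the paper intends: the paper omits an explicit proof of this lemma, but its phrasing (``Exploiting the formula (\ref{eq:expval})\dots'') points to precisely your argument --- collapse the sibling caret into $\mathcal{F}(\mu^\alpha,\mu^\beta)$, expand in the $\mu^\gamma$ basis via the fusion coefficients, and apply the one-point formula at level $m-1$. Your reading of $d_T(x,y)=1$ as the two leaves being children of a common caret, and the resulting exponent $m-2$, are both right.
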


Putting these lemmas all together, along with Lemma~\ref{lem:treemetric}, we obtain the following formula for the two-point correlation function
\begin{proposition}
	Let $x, y\in [0,1)$ be two points with $m$-digit binary expansions
	\begin{equation}
		x = 0.x_{-1}\cdots x_{-m}, \qquad y = 0.y_{-1}\cdots y_{-m}.
	\end{equation}
	Then
	\begin{equation}
		C^{\alpha\beta}_m(x,y) = \frac1d (\lambda_\alpha\lambda_\beta)^{m+\lfloor \log_2(y\ominus x) \rfloor} \sum_{\gamma = 0}^{d^2-1} {f^{\alpha\beta}}_\gamma (\lambda_{\gamma})^{-\lfloor \log_2(y\ominus x) \rfloor-2} \tr(\mu^\gamma).
	\end{equation}
\end{proposition}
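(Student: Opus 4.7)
The proof is essentially a matter of chaining the two preceding lemmas and then translating the resulting exponent into the binary/logarithmic form via the tree metric lemma from the appendix. Here is how I would organise it.

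First, set $k+1 \equiv d_T(x,y)$, where $d_T$ is the tree metric on the leaves of $\mathcal{T}_m$. By the first lemma, applied once, we have
\begin{equation}
    C^{\alpha\beta}_m(x,y) = \lambda_\alpha\lambda_\beta\, C^{\alpha\beta}_{m-1}(x^{(1)},y^{(1)}),
\end{equation}
and since the truncation $x\mapsto x^{(1)}$ decreases the tree distance by exactly one whenever $d_T(x,y) \geq 2$, iterating this $k$ times yields
\begin{equation}
    C^{\alpha\beta}_m(x,y) = (\lambda_\alpha\lambda_\beta)^k\, C^{\alpha\beta}_{m-k}(x^{(k)},y^{(k)}),
\end{equation}
with $d_T(x^{(k)},y^{(k)}) = 1$, i.e.\ the two remaining leaves sit on a common caret at level $m-k$.

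Next, apply the second lemma at level $m-k$ (so the underlying tree has $N' = 2^{m-k}$ leaves). It gives
\begin{equation}
    C^{\alpha\beta}_{m-k}(x^{(k)},y^{(k)}) = \frac{1}{d}\sum_{\gamma=0}^{d^2-1} {f^{\alpha\beta}}_\gamma (\lambda_\gamma)^{(m-k)-2}\tr(\mu^\gamma),
\end{equation}
where I am using the one-point formula (\ref{eq:expval}) already invoked in the statement of the second lemma. Substituting back, I obtain
\begin{equation}
    C^{\alpha\beta}_m(x,y) = \frac{1}{d}(\lambda_\alpha\lambda_\beta)^k \sum_{\gamma=0}^{d^2-1} {f^{\alpha\beta}}_\gamma (\lambda_\gamma)^{m-k-2}\tr(\mu^\gamma).
\end{equation}

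Finally, I invoke Lemma~\ref{lem:treemetric} from Appendix~\ref{app:trees}, which expresses the tree metric in terms of binary expansions as $d_T(x,y) = m + 1 + \lfloor \log_2(y\ominus x)\rfloor$, i.e.\ $k = m + \lfloor \log_2(y\ominus x)\rfloor$. Substituting this identity in both exponents (noting $m-k-2 = -\lfloor\log_2(y\ominus x)\rfloor - 2$) gives precisely the stated formula.

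The only genuine point requiring care is the bookkeeping claim used in the iteration: that truncating one digit of each coordinate reduces $d_T$ by exactly one until the two leaves land on the same caret. This is immediate from the characterisation of the tree metric via the length of the common prefix of the binary expansions, which is what Lemma~\ref{lem:treemetric} encodes; once that is in place, the proposition is a two-line composition of the preceding lemmas, so I do not anticipate any substantive obstacle beyond correctly aligning conventions between the tree-distance index $k+1$ and the logarithmic expression $\lfloor\log_2(y\ominus x)\rfloor$.
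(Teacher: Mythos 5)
Your proof is correct and follows exactly the route the paper intends: the Proposition is stated as a direct consequence of the two preceding lemmas together with Lemma~\ref{lem:treemetric}, and your chaining of the reduction lemma $k$ times, the fusion lemma at level $m-k$, and the substitution $k = m + \lfloor\log_2(y\ominus x)\rfloor$ reproduces that argument with the exponent bookkeeping done correctly.
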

A particularly important special case is the ``end-to-end'' correlation function between the leaf labelled $0.00\cdots0$ and the leaf labelled $0.11\cdots 1$. This may be computed as
\begin{corollary}
	Let $x, y\in [0,1)$ be two points with $m$-digit binary expansions
	\begin{equation}
		x = 0.0_{-1} \cdots 0_{-m}, \qquad y = 0.1_{-1}\cdots 1_{-m}.
	\end{equation}
	Then
	\begin{equation}
		C^{\alpha\beta}_m(x,y) = \frac1d (\lambda_\alpha\lambda_\beta)^{m-1} \sum_{\gamma = 0}^{d^2-1} {f^{\alpha\beta}}_\gamma (\lambda_{\gamma})^{-1} \tr(\mu^\gamma).
	\end{equation}
\end{corollary}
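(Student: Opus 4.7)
The plan is to derive this corollary as an immediate specialization of the preceding proposition, supplemented only by a short computation at the particular endpoints. The leaves labelled by $x = 0.0_{-1}\cdots 0_{-m}$ and $y = 0.1_{-1}\cdots 1_{-m}$ are the leftmost and rightmost leaves of $\mathcal{T}_m$, so the paths from the root to these leaves diverge at the topmost caret and never meet again, giving $d_T(x,y) = m$. Equivalently, $y \ominus x = y - x = 1 - 2^{-m} \in [\tfrac{1}{2}, 1)$ for every $m \ge 1$, whence $\lfloor \log_2(y \ominus x)\rfloor = -1$. Substituting this single value into the formula of the proposition collapses both exponents --- $m + \lfloor \log_2(y\ominus x)\rfloor \to m-1$ and $-\lfloor \log_2(y\ominus x)\rfloor - 2 \to -1$ --- producing exactly the claimed expression.

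A more self-contained route would bypass the proposition and iterate the two preceding lemmas directly. The first lemma, applied with $k = m-1$ (since $d_T(x,y) = m = k+1$), reduces the problem to a two-leaf subtree:
\[
C^{\alpha\beta}_m(x,y) = (\lambda_\alpha \lambda_\beta)^{m-1}\, C^{\alpha\beta}_1(x^{(m-1)}, y^{(m-1)}),
\]
whose remaining leaves sit at tree distance one. The second lemma, specialized to level $1$, then evaluates the residual correlator by fusing $\mu^\alpha$ with $\mu^\beta$ and applying~(\ref{eq:expval}), producing $\frac{1}{d}\sum_\gamma {f^{\alpha\beta}}_\gamma \lambda_\gamma^{-1}\tr(\mu^\gamma)$. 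Multiplying these factors together yields the corollary.

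There is no substantive obstacle in either route. The only point requiring attention is the level bookkeeping at the bottom of the recursion: in the second lemma the residual expectation value contributes a factor $\lambda_\gamma^{m-2}$, and substituting $m \mapsto 1$ produces the $\lambda_\gamma^{-1}$ that characterizes the end-to-end formula. This slightly unusual negative exponent is the fingerprint of the fact that the two ascending operators meet only at the root of $\mathcal{T}_m$, so that after the full cascade of $\mathcal{E}$-applications the fused operator $\mathcal{F}(\mu^\alpha,\mu^\beta)$ lives on a single effective site whose remaining ``tree'' above it contributes the $\lambda_\gamma^{-1}$ correction relative to the general formula.
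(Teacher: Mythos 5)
Your first route is exactly what the paper intends: the corollary is stated as an immediate specialization of the preceding proposition, and your computation $y\ominus x = 1-2^{-m}\in[\tfrac12,1)$, hence $\lfloor\log_2(y\ominus x)\rfloor=-1$, correctly collapses the exponents to $m-1$ and $-1$. The proposal is correct (the second, lemma-by-lemma route is a consistent cross-check rather than a genuinely different argument), so nothing further is needed.
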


\section{Renormalised field operators}\label{sec:renormfieldoprs}

In this section, following the arguments of \cite{osborneContinuumLimitsQuantum2019}, the objective is to interpret the two-point correlation functions as being for some kind of \emph{discretised quantum field operators}. To do this requires the introduction of a ``field-strength renormalization''. We provide the main motivation for our later definition of the \emph{primary field operators}. As this section has mostly motivational character, it can be skipped upon first reading.

We want to think of the eigenvectors $\mu^\alpha$ of the channel $\mathcal{E}$ as the discretisation of some putative primary field operators on a lattice. To this end we define the \emph{discretised field operators on a lattice with spacing $a_m = a^{m}$}, where $a > 0$. Thus we introduce a new correspondence between leaves $j\in\{0,1,\ldots, 2^m-1\}$ of the regular binary tree $\mathcal{T}_m$ and points $x_j\in\mathbb{R}$ on the real line:
\begin{equation}
	j \leftrightarrow x_j = f_m(j) = a_m (j - 2^{m-1}).
\end{equation}
In this way, given $x_j$, we obtain $j$ according to
\begin{equation}
	j = a_m^{-1} x_j + 2^{m-1}.
\end{equation}
The choice of lattice spacing that corresponds to the notation of the previous section is $a=\tfrac12$. We hence define the discretised field operator at scale $m$:
\begin{equation}
	\phi_\alpha (x_j) = Z_\alpha(m)\mu_{a_m^{-1} x_j + 2^{m-1}}^{\alpha},
\end{equation}
where $x_j \in [-a_m2^{m-1}, a_m2^{m-1}]$. The number $Z_\alpha(m)$ is called the \emph{field-strength renormalization} and it is allowed to depend on the \emph{scale} or \emph{level} $m$.

The process whereby we determine $Z_\alpha(m)$ is as follows. We first compute the \emph{correlation function at scale $m$} of the discretised primary fields via
\begin{equation}
	\langle \phi_\alpha (x)\phi_\beta (y)\rangle_m = Z_\alpha(m)Z_\beta(m)\langle \mu^\alpha_{a_m^{-1} x + 2^{m-1}} \mu^\beta_{a_m^{-1} y + 2^{m-1}}\rangle_N,
\end{equation}
i.e., they are simply the rescaled two-point correlation functions. The next step is then to try and take the \emph{continuum limit} $m\rightarrow \infty$ of these correlation functions to define the symbols
\begin{equation}
	\langle \hat{\phi}_\alpha (x)\hat{\phi}_\beta (y)\rangle \text{``$=$''} \lim_{m\rightarrow \infty}\langle \phi_\alpha (x)\phi_\beta (y)\rangle_m.
\end{equation}
Here the scare quotes indicate that the desired limits might not exist. The art is then to choose the field-strength renormalizations so that these limits (or some subset thereof) exist. This forces consistency conditions onto the $Z_\alpha(m)$s which are described below.

We typically choose
\begin{equation}
	Z_\alpha(m) = Z_\alpha^m,
\end{equation}
for some constants $Z_\alpha$, however, this is not strictly necessary.

Before we collect together the results in a proposition, let's explore the correlator
\begin{equation}
	\langle \phi_\alpha (x)\phi_\beta (y)\rangle_m.
\end{equation}
Thanks to the results of the previous section we find
\begin{multline}
	\langle \phi_\alpha (x)\phi_\beta (y)\rangle_m = \frac1d (Z_\alpha Z_\beta)^m(\lambda_\alpha\lambda_\beta)^{d_T(a_m^{-1} x + 2^{m-1},a_m^{-1} y + 2^{m-1})-1} \times \\ \sum_{\gamma = 0}^{d^2-1} {f^{\alpha\beta}}_\gamma (\lambda_{\gamma})^{m-d_T(a_m^{-1} x + 2^{m-1},a_m^{-1} y + 2^{m-1})-1} \tr(\phi^\gamma).
\end{multline}
Using the fact that
\begin{equation}
	d_T(a_m^{-1} x + 2^{m-1},a_m^{-1} y + 2^{m-1}) = m+1+\left\lfloor \log_2 \left( [(2a)^{-m}x + \tfrac12]\ominus[(2a)^{-m}y + \tfrac12] \right) \right\rfloor
\end{equation}
so that
\begin{multline}
	\langle \phi_\alpha (x)\phi_\beta (y)\rangle_m = \frac1d (Z_\alpha Z_\beta)^m(\lambda_\alpha\lambda_\beta)^{m+\left\lfloor \log_2 \left( [(2a)^{-m}x + \frac12]\ominus[(2a)^{-m}y + \frac12] \right) \right\rfloor} \times \\ \sum_{\gamma = 0}^{d^2-1} {f^{\alpha\beta}}_\gamma (\lambda_{\gamma})^{-\left\lfloor \log_2 \left( [(2a)^{-m}x + \frac12]\ominus[(2a)^{-m}y + \frac12] \right) \right\rfloor-2} \tr(\phi^\gamma).
\end{multline}
At this point we invoke physical arguments: we employ the ``estimate'' $\lfloor\log_2(y\ominus x)\rfloor \sim \log_2(y-x)$ by writing $\lfloor\log_2(y\ominus x)\rfloor=\log_2(y-x) + \Delta(y-x)$.
\begin{multline}
	\langle \phi_\alpha (x)\phi_\beta (y)\rangle_m = \frac1d (Z_\alpha Z_\beta)^m(\lambda_\alpha\lambda_\beta)^{m+ \log_2 \left( (2a)^{-m}(y-x)\right) + \Delta\left((2a)^{-m}(y-x))\right) } \times \\ \sum_{\gamma = 0}^{d^2-1} {f^{\alpha\beta}}_\gamma (\lambda_{\gamma})^{-\log_2 \left( (2a)^{-m}(y-x)\right) -\Delta\left((2a)^{-m}(y-x))\right)-2} \tr(\phi^\gamma).
\end{multline}
Although it is not justified to do so we now neglect $\Delta(y-x)$ to obtain an initial estimate for the behaviour of the correlator, in the hope that we can work out a condition for the continuum limits to exist:
\begin{align}
	&\langle \phi_\alpha (x)\phi_\beta (y)\rangle_m\\
	&\sim \frac1d (Z_\alpha Z_\beta)^m(\lambda_\alpha\lambda_\beta)^{m+ \log_2 \left( (2a)^{-m}(y-x)\right) } \sum_{\gamma = 0}^{d^2-1} {f^{\alpha\beta}}_\gamma (\lambda_{\gamma})^{-\log_2 \left( (2a)^{-m}(y-x)\right)-2} \tr(\phi^\gamma) \\
	&= \frac1d \sum_{\gamma = 0}^{d^2-1} {f^{\alpha\beta}}_\gamma (Z_\alpha Z_\beta)^m(\lambda_\alpha\lambda_\beta)^{m+ \log_2 \left( (2a)^{-m}(y-x)\right) }(\lambda_{\gamma})^{-\log_2 \left( (2a)^{-m}(y-x)\right)-2} \tr(\phi^\gamma) \\
	&= \frac1d \sum_{\gamma = 0}^{d^2-1} {f^{\alpha\beta}}_\gamma (Z_\alpha Z_\beta)^m(\lambda_\alpha\lambda_\beta)^{m-m\log_2 (2a) + \log_2(y-x)}(\lambda_{\gamma})^{m\log_2 (2a) - \log_2(y-x)-2} \tr(\phi^\gamma) \\
	&= \frac1d\sum_{\gamma = 0}^{d^2-1} {f^{\alpha\beta}}_\gamma \left(\frac{Z_\alpha Z_\beta}{\lambda_\gamma}\right)^m\left(\frac{\lambda_\alpha\lambda_\beta}{\lambda_\gamma}\right)^{-m\log_2 (a)}(y-x)^{h_\alpha+h_\beta - h_\gamma} \lambda_\gamma^{-2}\tr(\phi^\gamma) \\
	&= \frac1d\sum_{\gamma = 0}^{d^2-1} {f^{\alpha\beta}}_\gamma \left(\frac{Z_\alpha a^{h_\alpha} Z_\beta a^{h_\beta}}{\lambda_\gamma a^{h_\gamma}}\right)^m(y-x)^{h_\alpha + h_\beta - h_\gamma} \lambda_\gamma^{-2}\tr(\phi^\gamma)
\end{align}
where
\begin{equation}
	h_\alpha = -\log_2(\lambda_\alpha).
\end{equation}
Note that, being generically complex, $h_\alpha$ may have some horrible oscillating phase.

By isolating the $m$-dependent behaviour we obtain the continuum limit criteria:
\begin{equation}\label{eq:ctslimitcriteria}
	|Z_\alpha  Z_\beta | \le | \lambda_\gamma ||a^{h_\gamma-h_\alpha-h_\beta}|, \quad \forall \alpha,\beta,\gamma.
\end{equation}
Define the \emph{renormalised structure constants}
\begin{equation}
	{\widetilde{f}^{\alpha\beta}\hspace{0pt}}_\gamma = \lim_{m\rightarrow\infty}{f^{\alpha\beta}}_\gamma \left(\frac{Z_\alpha a^{d_\alpha} Z_\beta a^{d_\beta}}{\lambda_\gamma a^{d_\gamma}}\right)^m.
\end{equation}
Depending on the values we choose for $a$ and $Z_\alpha$ we have three (well, four) possibilities for the renormalised structure constants. Either: (a) ${\widetilde{f}^{\alpha\beta}\hspace{0pt}}_\gamma =0$; (b) ${\widetilde{f}^{\alpha\beta}\hspace{0pt}}_\gamma = {f^{\alpha\beta}}_\gamma$; or (c) ${f^{\alpha\beta}}_\gamma = \infty$. The third option is regarded as \emph{unphysical} and is excluded by condition (\ref{eq:ctslimitcriteria}). (The fourth option is that the limit doesn't exist because term in the brackets has absolute value $1$ but oscillates as $m$ increases.) The remaining two options then furnish an \emph{In\"{o}n\"{u}-Wigner contraction} of the structure constants ${f^{\alpha\beta}}_\gamma$.

The option that we take is to set $a=1/2$ and
\begin{equation}\label{eq:ffield}
  Z_\alpha = 2^{h_\alpha}.
\end{equation}

\section{\texorpdfstring{Modular invariance of Jones' unitary representations of $T$ arising from perfect tensors}{Modular invariance of Jones' unitary representations of T arising from perfect tensors}}\label{sec:modularinvariance}
In this section we discuss the unitary representations of $T$ arising from special 3-boxes $V$ called planar perfect tangles. We will see that such tangles give rise to representations of $T$ which are invariant under the action of the modular group $PSL(2,\mathbb{Z})$, which is a subgroup of $T$. This property is a coarser analogue of \emph{global conformal invariance} for CFTs.

\subsection{Jones' unitary representations}

First we review a special case of Jones' unitary representations of Thompson's groups \cites{jones_unitary_2014,jones_no-go_2016}.
Consider the set $\mathcal{T}$ of all binary trees. There is an order relation $\preceq$ on $\mathcal{T}$ given by inclusion of rooted trees, turning $\mathcal{T}$ into a directed set. A directed system, in this context, is a collection of Hilbert spaces $\mathcal{H}_t$, one for each tree $t$, and maps $T^s_t\colon\mathcal{H}_s\to\mathcal{H}_t$ such that
\begin{equation}
	T^s_u=T^t_u T^s_t
\end{equation}
whenever $s\preceq t\preceq u$, and $T^s_s=\mathbb{I}$ for all $s\in\mathcal{T}$.

The directed system has a direct limit, constructed as follows. Let
\begin{equation}
	\hat{\mathcal{H}}=\coprod_{t\in\mathcal{T}}\mathcal{H}_t
\end{equation}
be the disjoint union. On the vector space $\hat{\mathcal{H}}$ we define an equivalence relation $\sim$ as follows. Let $s, t\in\mathcal{T}$, and let $\phi_s\in\mathcal{H}_s$ and $\psi_t\in\mathcal{H}_t$. We have
\begin{equation}
	\phi_s\sim\psi_t
\end{equation}
if and only if there exists $u\in\mathcal{T}$ with $s, t\preceq u$ such that
\begin{equation}
	T^s_u \phi_s = T^t_u \psi_t.
\end{equation}
The quotient $\hat{\mathcal{H}}/{\sim}$ has an inner product, which for two vectors $\phi_s\in\mathcal{H}_s$ and $\psi_t\in\mathcal{H}_t$ is given by
\begin{equation}
	\langle T^s_u\phi_s, T^t_u\psi_t\rangle,
\end{equation}
where $u$ with $s, t\preceq u$ exists since $\mathcal{T}$ is a directed set. One can show that the inner product is well-defined. The completion $\mathcal{H}$ of $\hat{\mathcal{H}}/{\sim}$ with respect to the norm induced by the inner product is a well-defined Hilbert space, referred to as the direct limit of the direct system.

Next, we come to Jones' construction of representations of Thompson's groups $F$ (and $T$) on the direct limit Hilbert space. Recall that elements of Thompson's group $F$ can be considered as fractions of binary trees. A fraction $(s, t)$ can be extended by (possibly multiple) pairs of opposing carets to give a new, equivalent fraction $(s', t')$. We can view attachment of carets as an associative multiplication acting on trees, so that in the previous example the added carets are contained in a so-called forest $p$ such that we can write $(s', t')=(ps,pt)$. Now assume that the whole direct limit is built out of a single chosen Hilbert space $\mathcal{H}=\mathbb{C}^d$ together with an isometry $V\colon\mathcal{H}\to\mathcal{H}\otimes\mathcal{H}$. This means that for every tree $t$,
\begin{equation}
	\mathcal{H}_t=\bigotimes_{\abs{t}} \mathcal{H},
\end{equation}
where $\abs{t}$ denotes the number of leaves of $t$. Furthermore, if $s\preceq t$ and $ps=t$, that is, $p$ extends $s$ to give $t$, then we define $T^s_t$ to be the linear map $\Phi_V(p)\colon\mathcal{H}_s\to\mathcal{H}_t$ obtained by replacing each caret in $p$ by an instance of $V$. In this setting, we define a representation $\pi$ of $F$ on the direct limit $\mathcal{H}$ as follows. If $(s, t)$ is a fraction and $\phi_u\in\mathcal{H}_u$ an arbitrary vector in $\hat{\mathcal{H}}$, let $p$ and $q$ be such that $pt=qu$. Then we define
\begin{equation}
	\pi\bigl( [s, t] \bigr) [\phi_u] = [(\Phi(q)\phi)_{ps}],
\end{equation}
where the square brackets again denote taking equivalence classes. Jones showed that this gives a well-defined unitary representation $\pi$. The whole construction works similar for $T$ by replacing trees with annular trees.

There is always a particularly important state, called the \emph{vacuum state}, whose name is justified in Section~\ref{sub:modular}. It is the equivalence class of the state
\begin{equation}
	|\Omega_2\rangle =\frac{1}{\sqrt d} \sum_{j=0}^{d-1} |jj\rangle\in\mathcal{H}_{\tikz[scale=0.08]{\draw (0, 0) -- (1, 1) -- (2, 0);}},
\end{equation}
where $|j\rangle$ denotes any orthonormal basis of $\mathcal{H}$. The representatives of $|\Omega\rangle=[|\Omega_2\rangle]_{\sim}$ correspond to binary trees whose vertices are copies of $V$.

\subsection{Three-leg perfect tensors}

The notion of a perfect tensor was introduced by Pastawski, Yoshida, Harlow, and Preskill \cite{pastawski_holographic_2015}. These highly nongeneric objects capture a discrete version of \emph{rotation invariance} which is extremely useful in building network approximations to continuous manifolds. Since in this work we only deal with binary trees and thus all vertices are trivalent, we only define trivalent perfect tensors, and refer to \cite{pastawski_holographic_2015} for the general definition.

\begin{definition}
	A tensor $V\colon\mathbb{C}^d\to\mathbb{C}^d\otimes\mathbb{C}^d$ is called \emph{planar perfect} if it is proportional to an isometry from $\mathbb{C}^d$ to $\mathbb{C}^d\otimes\mathbb{C}^d$ for all possible choices of in and out legs; in other words, we have
	\begin{equation}
		\def\x{1.5}
		\begin{tikzpicture}[scale=0.6,baseline=-3mm]
			\draw (0, 0.5*\x) -- (0, 1.5*\x);
			\draw (0, 0.5*\x) -- (-1.5/1.73205*\x, 0) -- ++(0, -0.5*\x);
			\draw (0, 0.5*\x) -- (1.5/1.73205*\x, 0) -- ++(0, -0.5*\x);
			\node[above right] at (0, 0.5*\x) {$V^\dag$};
			\begin{scope}[yscale=-1, yshift=0.5*\x cm]
				\draw (0, 0.5*\x) -- (0, 1.5*\x);
				\draw (0, 0.5*\x) -- (-1.5/1.73205*\x, 0);
				\draw (0, 0.5*\x) -- (1.5/1.73205*\x, 0);
				\node[below left] at (0, 0.5*\x) {$V$};
			\end{scope}
		\end{tikzpicture}\,\sim\,
		\begin{tikzpicture}[scale=0.6,baseline=-3mm]
			\draw (0, -2*\x) -- (0, 1.5*\x);
		\end{tikzpicture}\mspace{90mu}
		\begin{tikzpicture}[scale=0.6,baseline=-3mm]
			\draw (0, 0.5*\x) -- (0, 1.5*\x);
			\draw (0, 0.5*\x) -- (-1.5/1.73205*\x, 0) -- ++(0, -0.5*\x);
			\draw (0, 0.5*\x) -- (1.5/1.73205*\x, 0) -- ++(0, -0.5*\x);
			\node[above left] at (0, 0.5*\x) {$V^\dag$};
			\begin{scope}[yscale=-1, yshift=0.5*\x cm]
				\draw (0, 0.5*\x) -- (0, 1.5*\x);
				\draw (0, 0.5*\x) -- (-1.5/1.73205*\x, 0);
				\draw (0, 0.5*\x) -- (1.5/1.73205*\x, 0);
				\node[below right] at (0, 0.5*\x) {$V$};
			\end{scope}
		\end{tikzpicture}\,\sim\,
		\begin{tikzpicture}[scale=0.6,baseline=-3mm]
			\draw (0, -2*\x) -- (0, 1.5*\x);
		\end{tikzpicture}\mspace{90mu}
		\begin{tikzpicture}[scale=0.6,baseline=-3mm]
			\draw (0, 0.5*\x) -- (0, 1.5*\x);
			\draw (0, 0.5*\x) -- (-1.5/1.73205*\x, 0) -- ++(0, -0.5*\x);
			\draw (0, 0.5*\x) -- (1.5/1.73205*\x, 0) -- ++(0, -0.5*\x);
			\node[below] at (0, 0.5*\x) {$V^\dag$};
			\begin{scope}[yscale=-1, yshift=0.5*\x cm]
				\draw (0, 0.5*\x) -- (0, 1.5*\x);
				\draw (0, 0.5*\x) -- (-1.5/1.73205*\x, 0);
				\draw (0, 0.5*\x) -- (1.5/1.73205*\x, 0);
				\node[above] at (0, 0.5*\x) {$V$};
			\end{scope}
		\end{tikzpicture}\,\sim\,
		\begin{tikzpicture}[scale=0.6,baseline=-3mm]
			\draw (0, -2*\x) -- (0, 1.5*\x);
		\end{tikzpicture}
	\end{equation}
	where the straight line represents the identity and the first leg of $V$ follows the label in counter-clockwise direction.
\end{definition}

In addition to this, we will always require that planar perfect tensors are rotation invariant. An example for qutrits is given by $V\colon\mathbb{C}^3\to\mathbb{C}^3\otimes\mathbb{C}^3$,
\begin{equation}\label{eq:perfect-example}
	\langle jk|V|l\rangle = \begin{cases}
		0 & \text{if $j=k$, $k=l$, or $l=j$,}\\
		\frac{1}{\sqrt{2}} & \text{otherwise.}
	\end{cases}
\end{equation}

\subsection{Modular invariance of the vacuum}\label{sub:modular}
Here we show that $|\Omega\rangle$ is invariant under the modular group $\mathit{PSL}(2,\mathbb{Z})$ with presentation
\begin{equation}
	\langle a,b|a^2=(ab)^3=1\rangle.
\end{equation}
To see this note that $\mathit{PSL}(2,\mathbb{Z})$ is a subgroup of $T$ under the homomorphism $\phi(a) = S$ and $\phi(b) = S^{-1}C$, where
\begin{equation}
	S(x) = \begin{cases}
		x+\frac12, \quad x\in [0,\tfrac12) \\ x-\frac12, \quad x\in [\tfrac12,1)
	\end{cases}
\end{equation}
is depicted below:
\begin{center}
	\begin{tikzpicture}[scale=3.3]
		\begin{scope}[shift={(3,0)}]
			\draw[lightgray] (0.5,0) -- (0.5,1);
			\draw[lightgray] (0, 0.5) -- (1, 0.5);
			\node[below] (a) at (0, -0.05) {$0$};
			\node[below] (b) at (1, -0.05) {$1$};
			\node[left] (d) at (-0.05, 0) {$0$};
			\node[left] (e) at (-0.05, 1) {$1$};
			\node[left] at (0.0,0.5) {$S(x)$};
			\node[below] at (0.5,-0.07) {$x$};
			\draw[lightgray,line width=0.5] (0, 0) rectangle (1,1);
			\draw[->] (0,-0.05) -- (0,1.05);
			\draw[->] (-0.05,0) -- (1.05,0);
			\draw (0, 0.5) -- (0.5,1);
			\draw (0.5, 0) -- (1,0.5);
		\end{scope}
	\end{tikzpicture}
\end{center}
We have $S\in T$ since $S=AC$. One may verify that $\pi(S)|\Omega\rangle = |\Omega\rangle = \pi(C)|\Omega\rangle$ (in general, $V$ needs to be rotation invariant for the second equality). We hence deduce that the vacuum state furnishes a one-dimensional representation of $\mathit{PSL}(2,\mathbb{Z})$, i.e., it is invariant.

A striking consequence of $\mathit{PSL}(2,\mathbb{Z})$ invariance is $\mathbb{Z}$ invariance, which we can intepret as translation invariance. The subgroup $\mathbb{Z}$ acts via the M\"obius transformation
\begin{equation}\label{eq:moebius}
  z \mapsto \frac{az+b}{cz+d} = z+n, \quad n\in\mathbb{Z},
\end{equation}
with $a = 1$, $b=n$, $c = 0$, and $d= 1$. Now if we regard the circle $S^1$ as the real line with infinity $\mathbb{R}\cup\{\infty\}$ (under a Cayley transformation and the Minkowski question mark $?(x)$) then we see that the modular subgroup $\mathit{PSL}(2,\mathbb{Z})$ of Thompson's group $T$ acts on $\mathbb{R}\cup\{\infty\}$ precisely via (\ref{eq:moebius}), i.e., integer shifts. Such a transformation is \emph{parabolic}, i.e., it has one fixed point on the circle, namely $\infty$.

Based on these observations we are happy to conjecture that the vacuum state is invariant under $\mathit{PSL}(2,\mathbb{Z})$ if and only if $V$ is a planar perfect tangle.

\section{\texorpdfstring{The continuum limit: primary fields for Thompson's groups $F$ and $T$}{The continuum limit: primary fields for Thompson's groups F and T}}\label{sec:primaryfields}
In this section we build observables on the semicontinuous limit Hilbert space $\mathcal{H}$ intended to represent \emph{smeared equal-time field operators} using the field-strength renormalization (\ref{eq:ffield}). We generalise everything to the case of nonregular partitions of $S^1$ as it costs no extra effort to do so (and is arguably more elegant). The discussions here are framed in terms of quantum spin systems, however, they generalise in a straightforward way to anyonic systems via trivalent categories.

Let $\mathcal{D}$ denote the directed set of standard dyadic partitions of $[0,1)$.
\begin{definition}
	Let $V\colon\mathbb{C}^d\to \mathbb{C}^d\otimes \mathbb{C}^d$, $\mathcal{E}(X) = V^\dag (X\otimes \mathbb{I}) V$, $\mathcal{E}(\mu^\alpha)= \lambda_\alpha\mu^\alpha$, and let $P\in\mathcal{D}$ be a standard dyadic partition. Let $f\in L^2([0,1],M_d(\mathbb{C}))$. Assume that $\lambda_\alpha\not=0$ for all $\alpha$ and define the following operator in $\mathcal{B}(\mathcal{H}_{P})$, where $\mathcal{H}_P=(\mathbb{C}^d)^{\otimes\abs{P}}$,
	\begin{equation}\label{eq:smearedfieldoperator}
		\phi_{{P}}(f) = \sum_{\substack{\alpha = 0 \\ \lambda_\alpha \not= 0}}^{d^2-1}\sum_{I\in {P}} \overline{f}_\alpha(I) (\lambda_\alpha)^{\log_2(|I|)}\mu_I^\alpha,
	\end{equation}
	where
	\begin{equation}
		\bar{f}_\alpha(I) = \frac{1}{d}\int_{I} \tr\bigl((\nu^\alpha)^\dag f(x)\bigr) dx
	\end{equation}
	and $\lambda_\alpha$ and $\nu^\alpha$ are the corresponding ascending weights and dual operators for $\mathcal{E}$.

\end{definition}
\begin{remark}
	The discretised field operator $\phi_{{P}}(f)$ is meant to represent a continuum field operator first smeared out by $f$ and then \emph{discretised}, \emph{averaged}, or \emph{coarse grained}, over the intervals making up the partition $P$. Intuitively, as the partition $P$ is taken finer and finer we should recover a dyadic version of the standard smeared field operator $\phi(f)$ of quantum field theory.
\end{remark}

We want to use the discretized field operator $\phi_P(f)$ to build $n$-point correlation functions. In the language of physics, this is achieved by replacing the smearing function $f$ with a delta function $f(x)=\delta(x-z)A$, where $A\in M_d(\mathbb{C})$. Mathematically this can be achieved by using operator-valued distributions. However, in order to avoid a long digression on distributions, we make the substitution $f(x)=\delta(x-z)M$ in (\ref{eq:smearedfieldoperator}) and note that
\begin{equation}
	\bar{f}_\alpha(I) = \tr\left((\nu^\alpha)^\dag M\right)\int \delta(x-z)\chi_I(x)\, dx = \tr\bigl((\nu^\alpha)^\dag M\bigr)\chi_I(z),
\end{equation}
where $\chi_I$ denotes the indicator function of $I$.
From this, we can distil the following \emph{ad hoc} definition.

\begin{definition}
	The \emph{discretised field operator} of type $\alpha$ at $z\in S^1$ with respect to the partition $P$ is defined to be
	\begin{equation}\label{eq:discretefieldoperator}
		\phi_{{P}}^\alpha(z) = \sum_{I\in {P}} \chi_I(z) (\lambda_\alpha)^{\log_2(|I|)}\mu_I^\alpha.
	\end{equation}
\end{definition}

Using products of discretised field operators at various positions allows us to define $n$-point functions. The general idea is that, for a given state $|\psi\rangle \in\mathcal{H}$, the $n$-point function should look like
\begin{equation}
	\text{``}C_{|\psi\rangle}^{\alpha_1\alpha_2\cdots\alpha_n}(x_1, x_2, \ldots, x_n) \sim \langle \psi|\hat{\phi}^{\alpha_1}(x_1)\hat{\phi}^{\alpha_2}(x_2)\cdots \hat{\phi}^{\alpha_n}(x_n)|\psi\rangle\text{''},
\end{equation}
with the subscript $P$ removed. Here the $\hat{\phi}^\alpha(x)$ are some putative \emph{quantum field operators}. To actually get this to work we need a limit to eliminate the partition $P$ in the definition (\ref{eq:discretefieldoperator}) of the discretised field operator. We achieve this by taking a limit over all partitions refining $P$. The existence of these limits is not obvious; we introduce some auxiliary definitions to facilitate the proof.

\begin{definition}
	Let $\mathbf{x} = (x_1, x_2, \ldots, x_n)$ be an ordered tuple of numbers lying in $[0,1)$, i.e., $0\le x_1 <x_2<\cdots < x_n < 1$. We say that $P\in\mathcal{D}$ is a \emph{supporting partition} for $\mathbf{x}$ if in any interval $I\in P$ there is \emph{at most} one of elements of the tuple $\mathbf{x}$ in $I$. We say that $P$ is a \emph{minimal supporting partition} for $\mathbf{x}$ if there is no coarser supporting partition.
\end{definition}
\begin{example}
	Let $\mathbf{x} = (\tfrac{1}{7}, \tfrac{2}{3}, \tfrac{5}{6})$. Then the partition
	\begin{equation}
		\{ [0,\tfrac14), [\tfrac14,\tfrac12), [\tfrac12,\tfrac34), [\tfrac34,1]\}
	\end{equation}
	is supporting for $\mathbf{x}$:
	\begin{center}
		\begin{tikzpicture}[scale=5]
			\draw (0, 0) -- (1,0);
			\draw (0, -0.025) -- (0, 0.025);
			\draw (0.25, -0.025) -- (0.25, 0.025);
			\draw (0.5, -0.025) -- (0.5, 0.025);
			\draw (0.75, -0.025) -- (0.75, 0.025);
			\draw (1, -0.025) -- (1, 0.025);
			\draw[fill=black] (1/7, 0) circle (0.01);
			\draw[fill=black] (2/3, 0) circle (0.01);
			\draw[fill=black] (5/6, 0) circle (0.01);
		\end{tikzpicture}
	\end{center}
	The partition $\{ [0,\tfrac12), [\tfrac12,\tfrac34), [\tfrac34,1]\}$ is a minimal supporting partition:
	\begin{center}
		\begin{tikzpicture}[scale=5]
			\draw (0, 0) -- (1,0);
			\draw (0, -0.025) -- (0, 0.025);
			\draw (0.5, -0.025) -- (0.5, 0.025);
			\draw (0.75, -0.025) -- (0.75, 0.025);
			\draw (1, -0.025) -- (1, 0.025);
			\draw[fill=black] (1/7, 0) circle (0.01);
			\draw[fill=black] (2/3, 0) circle (0.01);
			\draw[fill=black] (5/6, 0) circle (0.01);
		\end{tikzpicture}
	\end{center}
\end{example}
One can prove the following two lemmas.
\begin{lemma}
	Let $\mathbf{x}$ be an ordered tuple in $[0,1)$ and let $P$ support $\mathbf{x}$. Suppose that $Q$ refines $P$, i.e., $P\preceq Q$. Then $Q$ supports $\mathbf{x}$.
\end{lemma}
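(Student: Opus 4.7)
The plan is to argue directly from the definition of refinement. Recall from the earlier setup that $P \preceq Q$ means that every element of $Q$ is a subset of some element of $P$. So for each interval $J \in Q$ there exists an interval $I \in P$ with $J \subseteq I$.

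Given this, I would fix an arbitrary $J \in Q$ and let $I \in P$ be an interval with $J \subseteq I$. Since $P$ supports $\mathbf{x}$, by hypothesis $I$ contains at most one entry of $\mathbf{x}$. Any entry $x_i \in J$ is also an entry in $I$, so $J$ inherits the bound and contains at most one entry of $\mathbf{x}$. Ranging over all $J \in Q$ yields that $Q$ supports $\mathbf{x}$, completing the proof.

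There is no real obstacle here: the lemma is essentially a tautology once the definitions of \emph{supporting partition} and \emph{refinement} are unpacked. The only minor subtlety worth noting is that the particular $I$ containing $J$ is unique (since the elements of $P$ are disjoint), but this is not actually needed for the argument — existence of at least one such $I$ suffices. The result can therefore be presented as a short two-line proof, and it simply records a monotonicity property of the ``supporting'' relation under refinement, which will be used in the sequel to justify taking limits of discretised field operators along refinements.
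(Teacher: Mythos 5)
Your proof is correct: the paper states this lemma without proof (prefacing it only with ``One can prove the following''), precisely because it is the immediate definitional unwinding you give — each $J\in Q$ sits inside some $I\in P$, which contains at most one entry of $\mathbf{x}$, so $J$ does too. Nothing further is needed.
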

\begin{lemma}
	Let $\mathbf{x}$ be an ordered tuple lying in $[0,1)$. The minimal supporting partition for $\mathbf{x}$ is unique.
\end{lemma}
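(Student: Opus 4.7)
The plan is to exhibit an explicit minimal supporting partition $P_{\min}(\mathbf{x})$ and then argue that any minimal supporting partition must coincide with it, so uniqueness follows. Along the way this also gives existence.

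First I would define $P_{\min}(\mathbf{x})$ by the following top-down recursion on standard dyadic intervals $I$, starting from $[0,1)$: if $I$ contains at most one of the $x_i$, declare $I$ a leaf; otherwise split $I$ into its two standard dyadic halves and recurse on each. Since the $x_i$ are finitely many and pairwise distinct, every pair is eventually separated by dyadic intervals, so the recursion terminates and $P_{\min}$ is a finite standard dyadic partition. By construction $P_{\min}$ is supporting, and one reads off the characterization: $I \in P_{\min}$ if and only if $I$ contains at most one $x_i$ and either $I=[0,1)$ or the dyadic parent of $I$ contains at least two of the $x_i$.

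Next I would show that $P \subseteq P_{\min}$ as sets of intervals, for any minimal supporting partition $P$. Fix $I \in P$; the supporting hypothesis already gives that $I$ contains at most one $x_i$. Assuming $I \neq [0,1)$, let $J$ be the dyadic sibling of $I$ and $I'$ their common dyadic parent, and suppose for contradiction that $I'$ contains at most one $x_i$, so that $J$ also does. If $J$ itself belongs to $P$, then $I$ and $J$ are a pair of sibling leaves of $T_P$ whose merger still supports $\mathbf{x}$, contradicting minimality of $P$. If $J \notin P$, then the subtree of $T_P$ lying over $J$ has at least two leaves; any finite binary tree with two or more leaves contains a pair of sibling leaves $J_L, J_R$, and because $J_L \cup J_R \subseteq J$ this combined interval still contains at most one $x_i$, so $P$ can be coarsened by merging $J_L$ and $J_R$, again contradicting minimality. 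Hence the dyadic parent of $I$ must contain at least two $x_i$, so $I \in P_{\min}$.

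Finally, since $P$ and $P_{\min}$ are both partitions of $[0,1)$ and we have just shown $P \subseteq P_{\min}$, they must agree: any point $z \in [0,1)$ lies in a unique interval of $P_{\min}$, and the $P$-interval containing $z$ is also an element of $P_{\min}$ (by the inclusion) containing $z$, so the two intervals coincide. Hence $P = P_{\min}$, which proves uniqueness. The main obstacle I expect is exactly the case where the dyadic sibling $J$ of $I$ is not itself in $P$ but is finely subdivided; the clean way through relies on the small combinatorial fact that any finite binary tree with at least two leaves has a pair of sibling leaves, which is what produces the local coarsening needed to contradict minimality.
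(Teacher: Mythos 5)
Your proof is correct, and the object you construct is exactly the paper's: the paper looks at the subtree of dyadic intervals containing at least two of the $x_i$ (which is connected and contains the root, by the additivity of the counting function over the two children) and takes the frontier of that subtree, which is precisely your characterization ``$I$ contains at most one $x_i$ and its dyadic parent contains at least two.'' Where you diverge is in the uniqueness step. The paper argues globally: every supporting partition must consist of intervals lying outside the bad subtree, hence every supporting partition \emph{refines} the frontier partition, making it the coarsest (minimum) supporting partition and a fortiori the unique minimal one. You instead take an arbitrary minimal supporting partition and rule out any discrepancy by a local coarsening move, using the small combinatorial fact that a finite full binary tree with at least two leaves has a pair of sibling leaves; this shows containment of $P$ in $P_{\min}$ as sets and then equality because both are partitions. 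Both routes are sound. The paper's buys a slightly stronger statement for free (minimal $=$ minimum, i.e., the supporting partitions form a directed family with a coarsest element), which is implicitly used later when limits over refinements are taken; yours is more self-contained in that it only ever invokes the definition of minimality and elementary tree combinatorics, at the cost of the extra sibling-leaf lemma. One could tighten your write-up by noting explicitly that no element of $P$ can strictly contain the sibling $J$ (it would then contain $I$, contradicting disjointness), which is what guarantees that $J$ is a union of at least two $P$-intervals in your second case, but this is a routine point.
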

\begin{proof}
	Recall that $\mathcal{T}$ is the infinite binary tree of standard dyadic intervals whose nodes are labelled by $[\tfrac{a}{2^m}, \tfrac{a+1}{2^m})$, $a\in\{0,1,\ldots, 2^m-1\}$. Denote by $V$ the vertices of the tree (which are in bijection with the standard dyadic intervals) and by $I_v$ the standard dyadic interval associated to a vertex $v$ of the tree $\mathcal{T}$.

	Define the function $n:V\rightarrow \mathbb{Z}^+$ via
	\begin{equation}
		n(v) = |\{x_j\in\mathbf{x}\,|\, x_j\in I_v\}|.
	\end{equation}
	The function $n$ has the property that
	\begin{equation}\label{eq:nleafadds}
		n(v) = n(\text{left leaf of $v$}) + n(\text{right leaf of $v$}).
	\end{equation}
	Find the subtree $T_P = (V_P, E_P)$ of $\mathcal{T}$ defined by the property that $n(v)>1$ for all $v\in V_P$. This induces\footnote{Recall that a vertex-induced subgraph of a graph $G$ is a subset of the vertices along with all edges in $G$ whose endpoints lie in the subset.} a connected subtree by virtue of (\ref{eq:nleafadds}). Deleting $T_P$ (and all its associated edges in $E_P$) from $\mathcal{T}$ gives $m$ disconnected infinite binary trees whose root nodes induce a minimal supporting partition. Any minimal supporting partition would have to exclude $T_P$ and hence $P$ so constructed is unique.
\end{proof}
The utility of minimal supporting partitions for tuples $\mathbf{x}$ is that they directly allow us to reduce the computation of an $n$-point correlation function in the limit of fine partitions to a finite computation. To see this we specialise henceforth to the $n$-point functions of the \emph{vacuum vector} $|\Omega\rangle\in\mathcal{H}$.

Consider an $n$-tuple $\mathbf{x}$ in $[0,1)$ and let $P$ be its minimal supporting partition. Define for any tuple $\boldsymbol{\alpha} = (\alpha_1, \alpha_2, \ldots, \alpha_n)$ and any $Q\in\mathcal{D}$ refining $P$, i.e., $P\preceq Q$, the operator
\begin{equation}
	M^{\boldsymbol{\alpha}}_Q(\mathbf{x}) = \prod_{j=1}^n \phi_Q^{\alpha_j}(x_j).
\end{equation}
Since $Q$ is a supporting partition (it refines the minimal supporting partition) each of the factors in the product commutes with the others. Therefore the expression is well-defined.
\begin{lemma}
	Suppose $Q\in\mathcal{D}$ is a partition refining the minimal supporting partition $P$ of a tuple $\mathbf{x}$, i.e., $P\preceq Q$. Then
	\begin{equation}
		\langle \Omega_P|M^{\boldsymbol{\alpha}}_P(\mathbf{x})|\Omega_P\rangle = \langle \Omega_Q|M^{\boldsymbol{\alpha}}_Q(\mathbf{x})|\Omega_Q\rangle,
	\end{equation}
	where $[|\Omega_P\rangle] = [|\Omega_Q\rangle]$.
\end{lemma}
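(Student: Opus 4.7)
My plan is to reduce to the case of a single-caret refinement and then iterate. Since $Q$ refines $P$ via finitely many caret additions, I would write $P=P_0\preceq P_1\preceq\cdots\preceq P_k=Q$, where each $P_{i+1}$ is obtained from $P_i$ by splitting one interval $I\in P_i$ into its two dyadic halves $I_L$ and $I_R$. Transitivity of the connecting isometries $T^{P_i}_{P_{i+1}}$ then factors the relation $|\Omega_Q\rangle = T^P_Q |\Omega_P\rangle$ step by step, so it suffices to prove the identity across a single refinement step and chain the results.

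For the single-step identity at an interval $I$, I would split into two cases. If $I$ contains none of the $x_j$, then for every $j$ the interval of $P_{i+1}$ containing $x_j$ coincides with that of $P_i$, so each $\phi^{\alpha_j}(x_j)$ sits on a leaf disjoint from the refined caret and the local isometry $V$ passes through via $V^\dag V=\mathbb{I}$. If instead some $x_{j_0}\in I$, then because $P$ (hence $P_i$) is supporting $j_0$ is unique, and $x_{j_0}$ lies in exactly one of $I_L,I_R$; the weight on the corresponding $\mu^{\alpha_{j_0}}$ drops from $\lambda_{\alpha_{j_0}}^{\log_2|I|}$ to $\lambda_{\alpha_{j_0}}^{\log_2|I|-1}$. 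The key computation is to conjugate this single factor by $V$, using
\[
V^\dag(\mu^{\alpha_{j_0}}\otimes\mathbb{I})V=\mathcal{E}(\mu^{\alpha_{j_0}})=\lambda_{\alpha_{j_0}}\mu^{\alpha_{j_0}},
\]
together with the mirrored identity for the $I_R$ case, which follows from $V\textsc{swap}=V$. This supplies exactly the compensating factor of $\lambda_{\alpha_{j_0}}$ required to match the $P_i$ side.

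The hard part will be the bookkeeping rather than any one algebraic identity: one must verify that the $n$ factors in $M^{\boldsymbol{\alpha}}_Q(\mathbf{x})$ act on mutually disjoint leaves, so that they commute and can be independently conjugated by the localised isometry $T^{P_i}_{P_{i+1}}$. This is exactly where the supporting-partition hypothesis earns its keep: at most one $x_j$ lies in any interval of $P_i$, and therefore of $P_{i+1}$, so the single caret addition touches at most one factor and the remaining factors pass through $T^{P_i}_{P_{i+1}}$ untouched. Once this is observed the induction closes and the equality follows; the fact that $[|\Omega_P\rangle]=[|\Omega_Q\rangle]$ in the direct limit is then automatic from the construction of the vacuum as the equivalence class attached to inserting copies of $V$.
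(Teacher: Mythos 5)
Your proposal is correct and is essentially the paper's argument: the paper conjugates the whole product by $T^P_Q$ at once, using that the relevant leaves lie in disjoint components of the connecting forest and that ascending $k$ levels produces a factor $(\lambda_\alpha)^{k}=(\lambda_\alpha)^{\log_2(|J|)-\log_2(|I|)}$ cancelling the normalisation in $\phi^\alpha_Q$, while you unroll that same conjugation into single-caret steps via $\mathcal{E}(\mu^\alpha)=\lambda_\alpha\mu^\alpha$ and $V\textsc{swap}=V$. The bookkeeping point you flag (disjointness of the touched leaves, guaranteed by the supporting-partition hypothesis) is exactly the observation the paper relies on as well.
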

\begin{proof}
	The first observation we make is that $T^P_Q$ acts in a simple way on ascending operators localised to intervals $I\in Q$: let $f\in\text{Mor}(|P|,|Q|)$ be the planar forest connecting the objects $|P|$ and $|Q|$ corresponding to the isometry $T^P_Q$ and note
	\begin{equation}
		(T^P_Q)^\dag(\mu^\alpha_I)T^P_Q = (\lambda_\alpha)^{d_f(I,J)-1}\mu^\alpha_J,
	\end{equation}
	where $d_f(I,J)$ is the number of edges in the planar forest connecting the leaf node associated to $I$ to the node associated with its corresponding root $J$. If $P$ and $Q$ are supporting partitions for $\mathbf{x}$ then the intervals in $P$ (respectively, $Q$) containing the elements $x_j$ of the tuple $\mathbf{x}$ belong to disconnected components of the planar forest $f$.
	Denote these intervals by $I_j$ (i.e., one for each $x_j$) and their corresponding roots by $J_j$.	The ascending operators $\mu_I^\alpha$ and $\mu_J^\alpha$ before and after the action of $T^P_Q$ all commute and we have that
	\begin{equation}
		(T^P_Q)^\dag\biggl(\prod_j\mu^\alpha_{I_j}\biggr)T^P_Q = \prod_j(\lambda_\alpha)^{d_f(I_j,J_j)-1}\mu^\alpha_{J_j}.
	\end{equation}
	Noting that $(\lambda_\alpha)^{d_f(I,J)-1} = (\lambda_\alpha)^{\log_2(|J|)-\log_2(|I|)}$ and taking expectations gives us the result.
\end{proof}
We have now assembled enough information to prove the following
\begin{theorem}\label{thm:npt}
	Let $\mathbf{x}$ be an ordered $n$-tuple in $[0,1)$ and $\boldsymbol{\alpha}$ be an $n$-tuple in $\{0,1,\ldots,d^2-1\}^{\times n}$. Then the limit 
	\begin{equation}
		C^{\alpha_1\alpha_2\cdots\alpha_n}(x_1, x_2, \ldots, x_n) = \lim_{P\preceq Q} \langle \Omega_Q|M^{\boldsymbol{\alpha}}_Q(\mathbf{x})|\Omega_Q\rangle
	\end{equation}
	exists and is equal to
	\begin{equation}
		\langle \Omega_P|M^{\boldsymbol{\alpha}}_P(\mathbf{x})|\Omega_P\rangle,
	\end{equation}
	where $P$ is the minimal supporting partition of $\mathbf{x}$.
\end{theorem}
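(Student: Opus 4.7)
The plan is to observe that Theorem~\ref{thm:npt} follows almost immediately from the lemma stated just before it, by showing that the quantity being limited is actually \emph{constant} on a cofinal subset of the indexing directed set. Thus the main conceptual work has already been done, and the remaining task is to verify that this constancy plus the directedness of $\mathcal{D}$ is enough to conclude convergence.

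First I would set up the limit carefully. The notation $\lim_{P \preceq Q}$ refers to a net indexed by the collection $\mathcal{D}_P \equiv \{Q\in\mathcal{D}\mid P\preceq Q\}$ of dyadic refinements of the (unique) minimal supporting partition $P$ of $\mathbf{x}$. Since $\mathcal{D}$ is directed by $\preceq$, so is $\mathcal{D}_P$: any two elements of $\mathcal{D}_P$ have a common refinement in $\mathcal{D}$, which automatically lies in $\mathcal{D}_P$. Hence it makes sense to speak of the limit of a net $Q\mapsto \langle\Omega_Q|M^{\boldsymbol{\alpha}}_Q(\mathbf{x})|\Omega_Q\rangle$ over $\mathcal{D}_P$.

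Next I would apply the preceding lemma, which asserts that for every $Q\in\mathcal{D}_P$ one has
\begin{equation}
    \langle\Omega_Q|M^{\boldsymbol{\alpha}}_Q(\mathbf{x})|\Omega_Q\rangle = \langle\Omega_P|M^{\boldsymbol{\alpha}}_P(\mathbf{x})|\Omega_P\rangle.
\end{equation}
This uses two ingredients from the lemma's proof: that supporting partitions are preserved under refinement (so $M^{\boldsymbol{\alpha}}_Q(\mathbf{x})$ is well defined for all $Q\in\mathcal{D}_P$), and that the inserted ascending operators are absorbed by the isometry $T^P_Q$ with a compensating factor $(\lambda_{\alpha_j})^{d_f(I_j,J_j)-1} = (\lambda_{\alpha_j})^{\log_2(|J_j|)-\log_2(|I_j|)}$, which is precisely cancelled by the renormalisation prefactor built into the definition~(\ref{eq:discretefieldoperator}) of $\phi^{\alpha_j}_Q(x_j)$.

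Since the net is constant on the directed set $\mathcal{D}_P$, its limit exists and equals this common value, giving the stated equality. The mildly delicate step — and really the only place where care is required — is to confirm that the constancy on $\mathcal{D}_P$ suffices, i.e., that one is not tacitly required to take a limit over all $Q\in\mathcal{D}$ (not merely those refining $P$). This is justified by noting that $\mathcal{D}_P$ is cofinal in the subdirected set of partitions supporting $\mathbf{x}$, which is the natural indexing set implicit in the notation $\lim_{P\preceq Q}$; any $Q$ outside $\mathcal{D}_P$ would fail to support $\mathbf{x}$ and so would not render $M^{\boldsymbol{\alpha}}_Q(\mathbf{x})$ meaningful. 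With this clarification the theorem is proved.
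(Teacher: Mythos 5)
Your proposal is correct and follows essentially the same route as the paper: both invoke the preceding lemma to show the net $Q\mapsto\langle\Omega_Q|M^{\boldsymbol{\alpha}}_Q(\mathbf{x})|\Omega_Q\rangle$ is (eventually) constant with value $\langle\Omega_P|M^{\boldsymbol{\alpha}}_P(\mathbf{x})|\Omega_P\rangle$ and then conclude by directedness of $\mathcal{D}$. Your extra remark about the indexing set $\mathcal{D}_P$ is a harmless clarification of the same argument.
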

\begin{proof}
	Write $e_Q = \langle \Omega_Q|M^{\boldsymbol{\alpha}}_Q(\mathbf{x})|\Omega_Q\rangle$. We need to argue that the net $(e_Q)$ is eventually in any neighbourhood around $\langle \Omega_P|M^{\boldsymbol{\alpha}}_P(\mathbf{x})|\Omega_P\rangle$. But this is immediate since there always exists $R\in\mathcal{D}$ such that $P\preceq R$ and $Q\preceq R$: for any partition $S$ refining $R$ we have that $e_S = e_P$, i.e., $e_Q$ is eventually equal to $e_P$. Since $\mathbb{C}$ is Haussdorff the proof is complete.
\end{proof}

	This theorem tells us that an arbitrary $n$-point function is well-defined and, further, is computable in terms of operators on a finite-dimensional Hilbert space.

\begin{corollary}
	Let $f$ be an element of Thompson's group $F$ (respectively, $T$), and let $|f\rangle = U(f)|\Omega\rangle \in \mathcal{H}$ be the vector in the unitary representation afforded by $V$ resulting from applying $f$. Suppose $\mathbf{x}$ is an ordered $n$-tuple in $[0,1)$ and let $\boldsymbol{\alpha}$ be an $n$-tuple in $\{0,1,\ldots,d^2-1\}^{\times n}$. Then the limit
	\begin{equation}
		C^{\alpha_1\alpha_2\cdots\alpha_n}_{|f\rangle}(x_1, x_2, \ldots, x_n) = \lim_{R\preceq Q} \langle f_Q|M^{\boldsymbol{\alpha}}_Q(\mathbf{x})|f_Q\rangle
	\end{equation}
	exists and is equal to
	\begin{equation}
		\langle \Omega_{P'}|U(f)^\dag {T^{f(P')}_R}^\dag M^{\boldsymbol{\alpha}}_R(\mathbf{x})T^{f(P')}_R U(f)|\Omega_{P'}\rangle,
	\end{equation}
	where $P$ is the minimal supporting partition of $\mathbf{x}$, $P'\succeq P$ is good for $f$, and $R$ refines both $P$ and $f(P')$. (Here we use the notation $|f_Q\rangle$ for a representation of $|f\rangle$ on partition $Q$.)
\end{corollary}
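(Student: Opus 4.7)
\emph{Proof plan.} My strategy is to reduce to the vacuum case handled by Theorem~\ref{thm:npt} by producing, for every sufficiently fine $Q$, an explicit representative of $|f\rangle$ on $\mathcal{h}_Q$ and then collapsing the computation to a finite one. The essential observation is that the preceding lemma's proof actually establishes an \emph{operator identity}: whenever $S\in\mathcal{D}$ supports $\mathbf{x}$ and $Q\succeq S$,
\begin{equation*}
(T^S_Q)^\dag\, M^{\boldsymbol{\alpha}}_Q(\mathbf{x})\, T^S_Q \;=\; M^{\boldsymbol{\alpha}}_S(\mathbf{x}),
\end{equation*}
since the indicator functions in each $\phi_Q^{\alpha_j}(x_j)$ isolate the unique sub-interval $I_j\in Q$ of the supporting interval $J_j\in S$, the lemma's factor $(\lambda_{\alpha_j})^{\log_2|J_j|-\log_2|I_j|}$ supplies exactly the exponent correction, and the factorisation of $T^S_Q$ over the disjoint $J$-blocks of $S$ converts the pull-back of the product of ascending operators into the product of the individual block pull-backs.

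First I would secure the auxiliary partitions. Compatibility of $f$ with $\mathcal{D}$ provides some good $P_0$, and directedness yields $P'\succeq P_0, P$; since the predicate ``good'' is inherited by refinements (the quantifier ``for all $Q\succeq P_0$'' certainly covers ``for all $Q\succeq P'$''), $P'$ is again good for $f$, so $f(P')\in\mathcal{D}$. A second application of directedness furnishes $R\succeq P, f(P')$.

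Because $P'$ is good for $f$, Jones' presentation of $F$ (respectively $T$) as binary-tree fractions lets us write $f=[s',t']$ with $t'$ the tree of $P'$ and $s'$ the tree of $f(P')$, so that $U(f)|\Omega_{P'}\rangle$ is a well-defined vector in $\mathcal{h}_{f(P')}$. For any $Q\succeq R$, the tower property of the direct system gives the canonical representative
\begin{equation*}
|f_Q\rangle \;=\; T^{f(P')}_Q\, U(f)|\Omega_{P'}\rangle \;=\; T^R_Q\, T^{f(P')}_R\, U(f)|\Omega_{P'}\rangle.
\end{equation*}
Substituting this and invoking the operator identity above with $S=R$ (legitimate because $R\succeq P$ supports $\mathbf{x}$) collapses $\langle f_Q|M^{\boldsymbol{\alpha}}_Q(\mathbf{x})|f_Q\rangle$ to the claimed closed-form expression, which no longer depends on $Q$.

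The net $\bigl(\langle f_Q|M^{\boldsymbol{\alpha}}_Q(\mathbf{x})|f_Q\rangle\bigr)_{Q\succeq R}$ is therefore eventually constant, and the Hausdorff argument used in the proof of Theorem~\ref{thm:npt} hands us existence of the limit together with the stated value. The main obstacle is purely bookkeeping: verifying that $U(f)|\Omega_{P'}\rangle$ really does live in $\mathcal{h}_{f(P')}$, and that the composition $T^R_Q T^{f(P')}_R$ really equals $T^{f(P')}_Q$. This is precisely where goodness of $P'$ and Jones' caret-attachment recipe do the genuine work; once in place, the remainder is a mechanical combination of the tower property with the operator identity, with no analytic subtleties to resolve.
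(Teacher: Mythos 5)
Your proof is correct and follows the route the paper clearly intends (the corollary is stated without proof there): you reduce to a finite computation by upgrading the preceding lemma's pull-back computation to the operator identity $(T^S_Q)^\dagger M^{\boldsymbol{\alpha}}_Q(\mathbf{x}) T^S_Q = M^{\boldsymbol{\alpha}}_S(\mathbf{x})$ for supporting $S\preceq Q$, use goodness of $P'$ to place the canonical representative of $U(f)|\Omega\rangle$ in $\mathcal{h}_{f(P')}$, and conclude via the same eventual-constancy-of-the-net argument as in Theorem~\ref{thm:npt}. The bookkeeping points you flag (heritability of goodness under refinement, the tower property $T^{f(P')}_Q = T^R_Q T^{f(P')}_R$, and the factorisation of the pull-back over the disjoint blocks containing the $x_j$) are exactly the ones that need checking, and you handle them correctly.
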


\begin{remark}
	There is more work to do if we want to realise the $n$-point functions as expectation values
\begin{equation}\label{eq:npointcorrfn}
C^{\alpha_1\alpha_2\cdots\alpha_n}(x_1, x_2, \ldots, x_n) = \langle \Omega|\hat{\phi}^{\alpha_1}(x_1)\hat{\phi}^{\alpha_2}(x_2)\cdots \hat{\phi}^{\alpha_n}(x_n)|\Omega\rangle
\end{equation}
of genuine quantum field operators $\hat{\phi}^{\alpha}(x)$. This would require us to show that the field operators obey the usual mathematical properties required of a quantum field operator, namely,
that $\hat{\phi}^{\alpha}(x)$ is an operator-valued distribution such that there is a dense subspace $D\subset \mathcal{H}$ of our Hilbert space such that
\begin{enumerate}
	\item For each Schwarz function $f\in\mathcal{S}$ the domain of definition $D(\hat{\phi}^{\alpha}(f))$ contains $D$.
	\item The induced map $\mathcal{S}\mapsto \text{End}(D)$ via $f\mapsto \hat{\phi}^{\alpha}(f)$ is linear.
	\item For every $|\psi\rangle \in D$ and $|\phi\rangle \in \mathcal{H}$, the inner product
	\begin{equation}
		\langle \phi|\hat{\phi}^{\alpha}(f)|\psi\rangle
	\end{equation}
	is a tempered distribution.
\end{enumerate}
To carry this out there seems to be no way around a Wightman-type reconstruction argument. We avoid this here, restricting our attention to simply analysing the properties of the $n$-point functions.
\end{remark}

\section{\texorpdfstring{Short-distance behaviour of the $n$-point functions}{Short-distance behaviour of the n-point functions}}\label{sec:shortdistance}
Many of the properties of $C^{\alpha_1\alpha_2\cdots\alpha_n}(x_1, x_2, \ldots, x_n)$ are immediate consequences of the formula in Theorem~\ref{thm:npt}. The first important result concerns the short-distance behaviour of the two-point function $C^{\alpha\beta}(x,y)$. To understand this we focus first on the case where $x$ and $y$ are dyadic, in which case we can express them in binary as
\begin{equation}
	\begin{split}
		x &= 0.x_{-1}x_{-2}\cdots x_{-m}, \quad\text{and} \\
		y &= 0.y_{-1}y_{-2}\cdots y_{-n},
	\end{split}
\end{equation}
where $x_j\in\{0,1\}$ and $y_j \in \{0,1\}$. (Such expansions are assumed to have an infinite sequence of trailing zeroes.) Without loss of generality we assume that $n>m$. The minimal supporting partition for the pair $(x,y)$ is easy to derive: first express $x = \overline{x} + x'$ and $y = \overline{x} + y'$, where $\overline{x}$ contains the first $l$ digits of the binary expansions of $x$ and $y$ which are in common. Now recursively subdivide the interval $[0,1]$ according to the following recipe: set $I\leftarrow [0,1]$ and $j\leftarrow -1$ and repeat steps (1) and (2) while $j \ge -l-1$:
\begin{enumerate}
	\item subdivide $I$ into $I=I_0\cup I_1$ and set $I\leftarrow I_{\overline{x}_{j}}$, where $\overline{x}_{j}$ is the $j$th digit of $\overline{x}$,
	\item set $j\leftarrow j-1$.
\end{enumerate}
The subdivisions carried out via this procedure induce a standard dyadic partition $P$ which is minimal for the pair $(x,y)$. Note that at the final iteration $x$ and $y$ are located in neighbouring intervals of length $2^{-l-1}$. Indeed, the two intervals separating $x$ and $y$ are none other than $I = [\overline{x},\overline{x}+\frac{1}{2^{l+1}})$ and $I'=[\overline{x}+\frac{1}{2^{l+1}}, \overline{x}+\frac{1}{2^{l}})$.
Now that we have the minimal separating partition we can immediately apply Theorem~\ref{thm:npt} to deduce the two-point function
\begin{equation}
	C^{\alpha\beta}(x,y) = \langle \Omega_P|(\lambda_\alpha^{-l-1}\mu^\alpha_I)(\lambda_\beta^{-l-1}\mu^\beta_{I'})|\Omega_P\rangle
\end{equation}
By making use of the structure constants for $\star$ we can explicitly evaluate this expression. This is summarised in the following lemma.
\begin{lemma}
	Let $0\le x<y <1$ be two dyadic fractions and let $\alpha,\beta \in {0,1,\ldots, d^2-1}$. Write $x = \overline{x} + x'$ and $y = \overline{x} + y'$, where $\overline{x}$ contains the first $l$ digits of the binary expansions of $x$ and $y$ which are in common. Then
	\begin{equation}
		C^{\alpha\beta}(x,y) = \sum_{\gamma=0}^{d^2-1}\lambda_\gamma^{-1} D(x,y)^{\log_2(\lambda_\alpha)+\log_2(\lambda_\beta)-\log_2(\lambda_\gamma)}{f^{\alpha\beta}}_{\gamma}\langle\Omega_{[0,1]}|\mu^\gamma|\Omega_{[0,1]}\rangle,
	\end{equation}
	where $D(x,y) = 2^{-l-1}$ is the \emph{coarse-graining distance} between $x$ and $y$.
\end{lemma}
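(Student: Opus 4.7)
The plan is to combine Theorem~\ref{thm:npt} with the explicit description of the minimal supporting partition $P$ given just before the lemma statement, and then invoke the fusion identity $\mathcal{F}(\mu^\alpha,\mu^\beta)=\sum_\gamma {f^{\alpha\beta}}_\gamma\mu^\gamma$ together with the ascending property $\mathcal{E}(\mu^\gamma)=\lambda_\gamma\mu^\gamma$ to collapse the tree from the lowest caret separating $x$ and $y$ up to the root.

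First I would start from the expression $C^{\alpha\beta}(x,y)=\langle\Omega_P|\phi_P^\alpha(x)\,\phi_P^\beta(y)|\Omega_P\rangle$ given by Theorem~\ref{thm:npt}. Using the definition \eqref{eq:discretefieldoperator} of the discretised field operator together with the fact that the two intervals $I$ and $I'$ containing $x$ and $y$ both have length $|I|=|I'|=2^{-l-1}=D(x,y)$, this rewrites as
\begin{equation}
C^{\alpha\beta}(x,y)=\lambda_\alpha^{-l-1}\lambda_\beta^{-l-1}\,\langle\Omega_P|\mu_I^\alpha\,\mu_{I'}^\beta|\Omega_P\rangle.
\end{equation}

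Next I would use that, by construction of $P$, the intervals $I$ and $I'$ are the two children of a common parent $\tilde I$ of size $2^{-l}$, so the pair $(\mu_I^\alpha,\mu_{I'}^\beta)$ meets at a single caret. By the definition of $\mathcal{F}$ and the fusion expansion, passing through that caret replaces $\mu_I^\alpha\otimes\mu_{I'}^\beta$ with $\sum_\gamma{f^{\alpha\beta}}_\gamma\mu_{\tilde I}^\gamma$ inside the vacuum expectation on the coarser partition $P''$ obtained from $P$ by merging $I$ and $I'$. Then I would iterate the ascending identity $\mathcal{E}(\mu^\gamma)=\lambda_\gamma\mu^\gamma$ exactly $l$ times to climb from $\tilde I$ up to the root interval $[0,1]$, each step collecting one factor of $\lambda_\gamma$; this is the same bookkeeping that produced \eqref{eq:expval} in Section~\ref{sec:corrfunstrees}. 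The outcome is
\begin{equation}
\langle\Omega_P|\mu_I^\alpha\,\mu_{I'}^\beta|\Omega_P\rangle=\sum_{\gamma}{f^{\alpha\beta}}_\gamma\,\lambda_\gamma^{\,l}\,\langle\Omega_{[0,1]}|\mu^\gamma|\Omega_{[0,1]}\rangle.
\end{equation}

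Finally I would rearrange the prefactors into the form stated in the lemma. Using $D(x,y)=2^{-l-1}$ one checks
\begin{equation}
\lambda_\alpha^{-l-1}\lambda_\beta^{-l-1}\lambda_\gamma^{\,l}
=\lambda_\gamma^{-1}\,2^{(-l-1)(\log_2\lambda_\alpha+\log_2\lambda_\beta-\log_2\lambda_\gamma)}
=\lambda_\gamma^{-1}\,D(x,y)^{\log_2\lambda_\alpha+\log_2\lambda_\beta-\log_2\lambda_\gamma},
\end{equation}
which delivers the stated formula after summing over $\gamma$.

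The only real content is the fusion-plus-ascending step; the rest is exponent bookkeeping. The main obstacle, if any, is making the identification $\langle\Omega_{P''}|\mu^\gamma_{\tilde I}|\Omega_{P''}\rangle=\lambda_\gamma^{\,l}\langle\Omega_{[0,1]}|\mu^\gamma|\Omega_{[0,1]}\rangle$ rigorously — this requires knowing that the vacuum states on different partitions are compatible with the tree isometries $T^{P}_{Q}$ in the direct limit, which is exactly the content of the preceding lemma on how $T^P_Q$ acts on localized ascending operators. Apart from this, everything is an immediate specialization of Theorem~\ref{thm:npt} to $n=2$.
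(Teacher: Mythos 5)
Your proposal is correct and follows essentially the same route as the paper's own proof: reduce to the minimal supporting partition via Theorem~\ref{thm:npt}, fuse $\mu^\alpha_I$ and $\mu^\beta_{I'}$ at their common caret using ${f^{\alpha\beta}}_\gamma$, ascend $l$ times with $\mathcal{E}$ to the root, and then do the exponent bookkeeping with $l+1=-\log_2 D(x,y)$. Your closing remark about justifying $\langle\Omega_{P''}|\mu^\gamma_{\tilde I}|\Omega_{P''}\rangle=\lambda_\gamma^{\,l}\langle\Omega_{[0,1]}|\mu^\gamma|\Omega_{[0,1]}\rangle$ via the earlier lemma on how $T^P_Q$ acts on localized ascending operators is exactly the right supporting ingredient.
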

\begin{proof}
	Start with the expression
	\begin{equation}
		C^{\alpha\beta}(x,y) = \langle \Omega_P|(\lambda_\alpha^{-l-1}\mu^\alpha_I)(\lambda_\beta^{-l-1}\mu^\beta_{I'})|\Omega_P\rangle.
	\end{equation}
	Since the intervals $I = [\overline{x},\overline{x}+\frac{1}{2^{l+1}})$ and $I'=[\overline{x}+\frac{1}{2^{l+1}}, \overline{x}+\frac{1}{2^{l}})$ are neighbours we can exploit the $\star$ operation to evaluate this expression on the coarse-grained partition $P'$ where the neighbouring intervals $I$ and $I'$ are joined to the interval $I_{\overline{x}}$ of length $l$:
	\begin{equation}
		C^{\alpha\beta}(x,y) = \sum_{\gamma=0}^{d^2-1}(\lambda_\alpha\lambda_\beta)^{-l-1} {f^{\alpha\beta}}_{\gamma}\langle\Omega_{P'}|\mu^\gamma_{I_{\overline{x}}}|\Omega_{P'}\rangle.
	\end{equation}
	This expression is easy to simplify via the action of the CP map $\mathcal{E}$:
	\begin{equation}
		C^{\alpha\beta}(x,y) = \sum_{\gamma=0}^{d^2-1}{f^{\alpha\beta}}_{\gamma}(\lambda_\alpha\lambda_\beta)^{-l-1}\lambda_\gamma^l \langle\Omega_{[0,1]}|\mu^\gamma|\Omega_{[0,1]}\rangle.
	\end{equation}
	Now write $l+1=-\log_2(D(x,y))$: we finally obtain
	\begin{equation}
		C^{\alpha\beta}(x,y) = \sum_{\gamma=0}^{d^2-1}\lambda_\gamma^{-1} D(x,y)^{\log_2(\lambda_\alpha)+\log_2(\lambda_\beta)-\log_2(\lambda_\gamma)}{f^{\alpha\beta}}_{\gamma}\langle\Omega_{[0,1]}|\mu^\gamma|\Omega_{[0,1]}\rangle.\qedhere
	\end{equation}
\end{proof}
\begin{remark}
	When $x$ and $y$ are a \emph{standard dyadic pair}, that is, $x = \frac{a}{2^l}$ and $y = \frac{a+1}{2^l}$, with $l\in\mathbb{Z}^+$ and $a \in \{0,1,\ldots, 2^l-1\}$, then $D(x,y) = |x-y|$, so that we can rewrite
	\begin{equation}\label{eq:2ptdyadicpair}
		C^{\alpha\beta}(x,y) = \sum_{\gamma=0}^{d^2-1}\lambda_\gamma^{-1} |x-y|^{\log_2(\lambda_\alpha)+\log_2(\lambda_\beta)-\log_2(\lambda_\gamma)}{f^{\alpha\beta}}_{\gamma}\langle\Omega_{[0,1]}|\mu^\gamma|\Omega_{[0,1]}\rangle.
	\end{equation}
\end{remark}

In the context of conformal field theory an expression such as (\ref{eq:2ptdyadicpair}) for standard dyadic pairs is especially suggestive. We therefore propose the following prototype definition for the analogue of the scaling dimension.
\begin{definition}
	The \emph{scaling dimension} $h_\alpha$ for the field $\hat{\phi}^\alpha(x)$ is
	\begin{equation}
		h_\alpha = -\operatorname{Re}\log_2(\lambda_\alpha).
	\end{equation}
\end{definition}

Contrary to the situation in conformal field theory there is no reason to expect that, in general,
\begin{equation}\label{eq:2pointCFTcorr}
	C^{\alpha\beta}(x,y) \sim C^{\alpha\beta} D(x,y)^{-2h}
\end{equation}
only when $h_\alpha = h = h_\beta$. We hence promote (\ref{eq:2pointCFTcorr}) to a \emph{necessary condition} for the existence of a physical continuum limit.

Let's take a look at the two-point correlation function for some prototypical examples. In general the correlation function is
\emph{discontinuous}. It typically behaves something like $C^{\alpha\beta}(x,y)\sim \left\lceil\right|x-y|^{-h}\rceil$, however, caution must be taken as the behaviour of the correlation function is asymmetric about $|x-y|=0$. Here we have illustrated the example $x=0$ in the case where $-\log_2(\lambda_\alpha) = \frac{1}{\sqrt{2}} = \log_2(\lambda_\beta)$:
\begin{center}
	\includegraphics{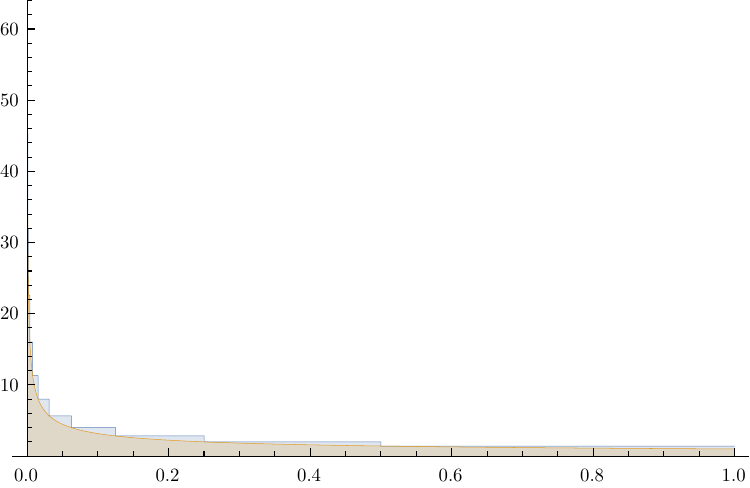}
\end{center}
The blue staircase-like line is the the correlation function itself. The brown line is the envelope given by $|x-y|^{-\frac{1}{2}}$.
In the second figure below we see the example where $x = 5/8$:
\begin{center}
	\includegraphics{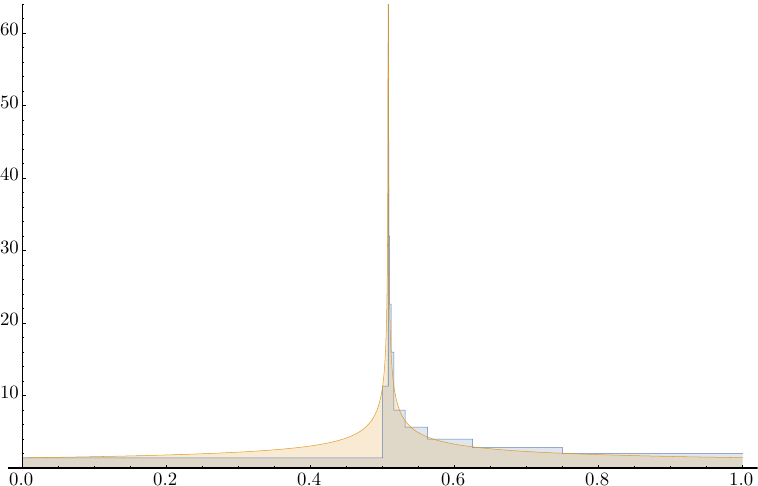}
\end{center}
Notice that the behaviour to the left of the point $x = 5/8$ is different to that on the right.

These two examples highlight the important fact that the continuum correlation functions for a tree state may be discontinuous and asymmetric.

\section{Fusion rules and the operator product expansion}\label{sec:ope}
So far we have studied the two-point correlation function. Now we look at the three point function in an attempt to obtain an analogue of the \emph{operator product expansion}.

In general, a three point function is of the form
\begin{equation}
	C^{\alpha\beta\gamma}(x,y,z) = \langle\Omega|\hat{\phi}^{\alpha}(x)\hat{\phi}^{\beta}(y)\hat{\phi}^{\gamma}(z)|\Omega\rangle.
\end{equation}
for $x,y,z\in[0,1)$ with $x<y<z$.

We can compute this correlation function by first finding the minimal supporting partition $P$ for $(x,y,z)$ and setting
\begin{equation}
	C^{\alpha\beta\gamma}(x,y,z) = \langle\Omega_P|(\lambda_{\alpha}^{-l-1}\mu^{\alpha}_I)(\lambda_{\beta}^{-m-1}\mu^{\beta}_J)(\lambda_{\gamma}^{-n-1}\mu^{\gamma}_K)|\Omega_P\rangle,
\end{equation}
where $I$, $J$, and $K$ are the intervals containing $x$, $y$, and $z$, respectively.

To calculate this expression note that we can exploit the formulas we already have for the two-point function. The important observation here is that when $d_T(x,y) < d_T(y,z)$ we can first fuse operators $\mu^{\alpha}$ and $\mu^{\beta}$ resulting in some linear combination of $\mu^{\gamma'}s$ and then we fuse these with $\mu^\gamma$. Correspondingly, if $d_T(x,y) > d_T(y,z)$ we first fuse the last two then fuse on the first operator. Thus, the three-point function is completely determined by knowledge of the fusion coefficients ${f^{\alpha\beta}}_\gamma$. The observation is also particularly reminiscent of the \emph{operator product expansion} (OPE). We exemplify this by promoting it to a prototype definition.
\begin{definition}
	Given formal \emph{primary fields} $\phi^\alpha(x)$, $\alpha = 0, 1, \ldots, d^2-1$, the formal short-distance expansion
	\begin{equation}
		  \hat{\phi}^\alpha(x)\hat{\phi}^\beta(y) \sim \sum_{\gamma=0}^{d^2-1} {f^{\alpha\beta}}_{\gamma}D(x,y)^{h_\gamma-h_\alpha-h_\beta} \hat{\phi}^\gamma(y)
	\end{equation}
	is called the \emph{operator product expansion}.
\end{definition}

Here the $\sim$ means that the expression only makes sense in a correlation function, and that oscillatory behaviour is neglected, that is, we only study the divergence up to an overall absolute value sign.

A crucial role is played by the structure of the dimensions $h_\alpha$ as they control, via the quantity $h_\gamma-h_\alpha-h_\beta$, the \emph{divergence} of the $n$-point correlation functions as $x\rightarrow y$.

The fusion coefficients ${f^{\alpha\beta}}_{\gamma}$ determine the structure of the three-point function. In particular, whether ${f^{\alpha\beta}}_{\gamma}=0$ or not determines whether a given correlation function is nontrivial or not. To this end we introduce the following three-index tensor
\begin{equation}
	{N^{\alpha\beta}}_\gamma = \begin{cases}
		1, \quad \text{if ${f^{\alpha\beta}}_{\gamma}\not=0$ and}\\ 0,\quad \text{otherwise.}
	\end{cases}
\end{equation}
This tensor can be used to construct an (in general) nonassociative and noncommutative algebra $\mathcal{A}$ over $\mathbb{Z}$. As a set we define $\mathcal{A}$ to be the lattice
\begin{equation}
	\mathcal{A} = \mathbb{Z}^{d^2},
\end{equation}
and we choose some basis $\{\phi^\alpha\}_{\alpha\in I}$, $I=\{0,1,\ldots,d^2-1\}$, and introduce the product operator $\star$ via
\begin{equation}
	\phi^\alpha\star \phi^\beta = \sum_{\gamma\in I} {N^{\alpha\beta}}_\gamma \phi^\gamma.
\end{equation}

Usually the algebra $\mathcal{A}$ will be neither associative nor commutative. However, in special cases, it can be the case that ${N^{\alpha\beta}}_\gamma$ ends up satisfying these additional constraints. In this case $\mathcal{A}$ becomes a \emph{fusion ring}. We can obtain a representation for the fusion ring via the commuting matrices $N^\alpha$ with matrix elements
\begin{equation}
	[N^\alpha]_{\beta\gamma} = {N^{\alpha\beta}}_\gamma.
\end{equation}

\section{\texorpdfstring{The action of Thompson's groups $F$ and $T$ on $n$-point functions}{The action of Thompson's groups F and T on n-point functions}}\label{sec:thompsonaction}
The analogy between CFT and quantum mechanics symmetric under Thompson's groups $F$ and $T$ manifests itself most strongly when considering how $n$-point
functions transform under Thompson group elements $f$. Here we discuss the $n$-point correlation function with respect to the vacuum vector $|\Omega\rangle$ and its transformed version $U(f)|\Omega\rangle$.

\begin{theorem}\label{thm:npointaction}
Let $f\in T$ be an element of Thompson's group $T$ and $U(f)$ its unitary representation. Then the action of $T$ on $\mathcal{H}$ in terms of $n$-point functions is
	\begin{equation}\label{eq:nptaction}
		C^{\boldsymbol{\alpha}}(x_1, x_2, \ldots, x_n) = \prod_{j=1}^n \left(\frac{df}{dx}\bigg|_{x = x_j}\right)^{-h_{\alpha_j}}C^{\boldsymbol{\alpha}}_{|f\rangle}(f(x_1), f(x_2), \ldots, f(x_n)),
	\end{equation}
	where the limit in the derivative is taken above via $x\rightarrow x_j+\epsilon$.
\end{theorem}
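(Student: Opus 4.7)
The plan is to reduce (\ref{eq:nptaction}) to a finite computation using the corollary to Theorem~\ref{thm:npt}, tracking how $U(f)$ moves ascending operators between partitions. The right side involves $C^{\boldsymbol{\alpha}}_{|f\rangle}$ at the transported points $f(\mathbf{x})$, and the corollary already expresses this as a vacuum matrix element on a chosen partition; the job is to massage that matrix element into $C^{\boldsymbol{\alpha}}(\mathbf{x})$ times the product of Jacobian factors.

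First I would select a standard dyadic partition $P'$ with three properties: it refines the minimal supporting partition $P$ of $\mathbf{x}$; it is good for $f$; and each interval $I_j\in P'$ containing a point $x_j$ lies entirely inside a single linearity region of $f$. Existence of such $P'$ follows because $f$ has only finitely many breakpoints in $[0,1)$ (or on $S^1$ in the $T$ case), so refinements of $P$ eventually resolve both the breakpoints and the points $x_j$. With $P'$ good for $f$, the image $f(P')$ is itself a standard dyadic partition and it supports the tuple $f(\mathbf{x})$. Taking $R=f(P')$ in the corollary yields
\begin{equation*}
	C^{\boldsymbol{\alpha}}_{|f\rangle}\bigl(f(x_1),\ldots,f(x_n)\bigr)=\langle\Omega_{P'}|U(f)^\dagger M^{\boldsymbol{\alpha}}_{f(P')}(f(\mathbf{x}))U(f)|\Omega_{P'}\rangle.
\end{equation*}

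Next I would expand $M^{\boldsymbol{\alpha}}_{f(P')}(f(\mathbf{x}))=\prod_j \lambda_{\alpha_j}^{\log_2|f(I_j)|}\mu^{\alpha_j}_{f(I_j)}$. Because $f$ has constant power-of-two slope on $I_j$, one has $|f(I_j)|=(df/dx)|_{x_j^+}\cdot|I_j|$, and the identity $\lambda_\alpha=2^{-h_\alpha}$ then peels off a scalar Jacobian factor per leg, of the form $(df/dx|_{x_j^+})^{\pm h_{\alpha_j}}$, leaving behind the operator $\lambda_{\alpha_j}^{\log_2|I_j|}\mu^{\alpha_j}_{f(I_j)}$. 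The remaining operator factor is pushed through the conjugation by $U(f)$ using the key representation-theoretic fact that, when $P'$ is good for $f$, $U(f)$ is the unitary implementing the bijection $I\leftrightarrow f(I)$ between the tensor factors of $\mathcal{h}_{P'}$ and $\mathcal{h}_{f(P')}$, so that
\begin{equation*}
	U(f)^\dagger \mu^{\alpha}_{f(I)} U(f)=\mu^{\alpha}_{I}.
\end{equation*}
Collecting the Jacobian scalars out in front, taking the remaining vacuum expectation on $\mathcal{h}_{P'}$, and invoking Theorem~\ref{thm:npt} to recognize this as $C^{\boldsymbol{\alpha}}(\mathbf{x})$, one obtains (\ref{eq:nptaction}) upon rearrangement.

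The main obstacle is the precise verification of the displayed interchange formula for $U(f)^\dagger\mu^{\alpha}_{f(I)}U(f)$. Intuitively, when $P'$ is good for $f$, $U(f)$ is nothing more than a relabelling of tensor factors according to how $f$ permutes intervals, but making this rigorous requires unpacking Jones' definition of $\pi([s,t])$ for the fraction of trees that represents $f$ via $P'$ and $f(P')$, and checking that the resulting unitary intertwines on-site ascending operators at $I$ and $f(I)$. In the $T$ case one must additionally handle a possible cyclic reordering of leaves, which does not affect the identification of localized operators. Everything else -- the existence of a good supporting $P'$, the scalar extraction, and the vacuum expectation -- is routine once this key step is in hand.
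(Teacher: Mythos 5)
Your proposal is correct and follows essentially the same route as the paper's proof: pick a refinement $P'$ of the minimal supporting partition that is good for $f$, express $C^{\boldsymbol{\alpha}}_{|f\rangle}$ at the transported points as the expectation of $M^{\boldsymbol{\alpha}}_{f(P')}(f(\mathbf{x}))$ in the state $U(f)|\Omega_{P'}\rangle$, peel off the scalars $(\lambda_{\alpha_j})^{\log_2|f(I_j)|-\log_2|I_j|}$ using $|f(I_j)|=f'(x_j^+)\,|I_j|$ and $h_\alpha=-\log_2\lambda_\alpha$, and identify the remaining matrix element with $C^{\boldsymbol{\alpha}}(\mathbf{x})$ via the relabelling $U(f)^\dagger\mu^\alpha_{f(I)}U(f)=\mu^\alpha_I$ on the good partition. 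The only difference is one of emphasis: you explicitly isolate that interchange formula as the step needing verification from Jones' construction, whereas the paper's proof leaves it implicit.
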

\begin{proof}
	Let $P$ be the minimal supporting partition for the tuple $(x_1, x_2, \ldots, x_n)$. Choose a refinement $P'$ which is good for $f$ and choose $R$ which refines both $P'$ and $f(P')$. Then
	\begin{equation}
		(T^{P'}_R)^\dag M^{\boldsymbol{\alpha}}_{R}(\mathbf{x})T^{P'}_R = M^{\boldsymbol{\alpha}}_{P'}(\mathbf{x})
	\end{equation}
	and the LHS of (\ref{eq:nptaction}) is directly equal to
	\begin{equation}
		C^{\boldsymbol{\alpha}}(x_1, x_2, \ldots, x_n) = \langle\Omega_{P'}| M^{\boldsymbol{\alpha}}_{P'}(\mathbf{x}) | \Omega_{P'}\rangle = \langle\Omega_{R}| M^{\boldsymbol{\alpha}}_{R}(\mathbf{x}) | \Omega_{R}\rangle.
	\end{equation}
	Now we compare left and right-hand sides: the correlation function on the RHS of (\ref{eq:nptaction}) is the expectation value of
	\begin{equation}
		M^{\boldsymbol{\alpha}}_{f(P')}(f(x_1), f(x_2), \ldots, f(x_n)) = \prod_{j=1}^n (\lambda_{\alpha_j})^{\log_2(|f(I_j)|)}\mu_{f(I_j)}^{\alpha_j},
	\end{equation}
	with respect to $U(f)|\Omega_{P'}\rangle$, noting that $f(P')$ is a supporting partition for
	\begin{equation}
		(f(x_1), f(x_2), \ldots, f(x_n)).
	\end{equation}
	Rewriting this expression as
	\begin{equation}
		M^{\boldsymbol{\alpha}}_{f(P')}(f(x_1), f(x_2), \ldots, f(x_n)) = \prod_{j=1}^n (\lambda_{\alpha_j})^{\log_2(|f(I_j)|)-\log_2(|I_j|)}(\lambda_{\alpha_j})^{\log_2(|I_j|)}\mu_{f(I_j)}^{\alpha_j}
	\end{equation}
	and taking the expectation value with respect to $U(f)|\Omega_{P'}\rangle$ gives us
	\begin{equation}
		\text{RHS} = \prod_{j=1}^n (\lambda_{\alpha_j})^{\log_2(|f(I_j)|)-\log_2(|I_j|)} C^{\boldsymbol{\alpha}}_{|f\rangle}(f(x_1), f(x_2), \ldots, f(x_n)).
	\end{equation}
	Now the we can calculate the length of the interval $f(I_j)$ as follows
	\begin{equation}
		|f(I_j)| = \left(\frac{df}{dx} \bigg|_{x = x_j}\right) |I_j|.
	\end{equation}
	(Here the derivative is defined with a limit from the right so as to avoid singularities when $x_j$ is at a breakpoint.)
	Taking logs and exchanging exponents using the identity $a^{\log(b)} = b^{\log(a)}$ gives us the result.
\end{proof}

By substituting $x_j = f^{-1}(z_j)$ we can rewrite this result in a somewhat more useful form,
% \begin{corollary}\label{cor:nptaction}
	\begin{equation}\label{eq:nptaction2}
		C^{\boldsymbol{\alpha}}_{|f\rangle}(z_1, z_2, \ldots, z_n) = \prod_{j=1}^n \left(\frac{df}{dz}\bigg|_{z = f^{-1}(z_j)}\right)^{h_{\alpha_j}} C^{\boldsymbol{\alpha}}(f^{-1}(z_1), f^{-1}(z_2), \ldots, f^{-1}(z_n)).
	\end{equation}
% \end{corollary}
\begin{remark}
	This corollary tells us that knowledge of the ``vacuum'' $n$-point function
	\begin{equation}
		\langle \Omega|\hat{\phi}^{\alpha_1}(z_1)\hat{\phi}^{\alpha_2}(z_2)\cdots \hat{\phi}^{\alpha_n}(z_n)|\Omega\rangle
	\end{equation}
	alone is enough to calculate the $n$-point functions with respect to any transformed state $|f\rangle = U(f)|\Omega\rangle$:
	\begin{multline}
		\langle f|\hat{\phi}^{\alpha_1}(z_1)\hat{\phi}^{\alpha_2}(z_2)\cdots \hat{\phi}^{\alpha_n}(z_n)|f\rangle = \\ \prod_{j=1}^n \left(\frac{df}{dz}\bigg|_{z = f^{-1}(z_j)}\right)^{h_{\alpha_j}}\langle \Omega|\hat{\phi}^{\alpha_1}(f^{-1}(z_1))\hat{\phi}^{\alpha_2}(f^{-1}(z_2))\cdots \hat{\phi}^{\alpha_n}(f^{-1}(z_n))|\Omega\rangle.
	\end{multline}
\end{remark}

In the case where our unitary representation is determined by a planar perfect tangle we deduce that the correlation function is $\mathit{PSL}(2,\mathbb{Z})$-invariant because $|f\rangle = |\Omega\rangle$ for $f\in\mathit{PSL}(2,\mathbb{Z})$.

\subsection{The connection between Thompson group actions and smearing}
We have seen in this section how to calculate the $n$-point correlation function with respect to a non-vacuum state $|f\rangle$ prepared by applying a Thompson group transformation. Here we exploit this observation to calculate the expectation values of smeared field operators. Note that this treatment is not entirely rigorous.

Let's suppose we have a field operator $\hat{\phi}^\alpha(x)$, obtained via the limiting procedure described above. If we ``take away the expectation values'' in (\ref{eq:nptaction2}) we obtain the following transformation law for the field:
\begin{equation}
      U(f)^\dag \hat{\phi}^\alpha(z) U(f) = \left(\frac{df}{dz}\bigg|_{f^{-1}(z)}\right)^{h_{\alpha}} \hat{\phi}^\alpha(f^{-1}(z)).
\end{equation}
This expression is understood to make sense only in the expectation values.

We can now use this field operator to build the smeared operator
\begin{equation}
      \hat{\phi}^\alpha(\chi) = \int_{S^1} \chi(z) \hat{\phi}^\alpha(z)\, dz,
\end{equation}
where $\chi(x)$ is a, say, $L^1(S^1)$ function. The action of $T$ on smeared field operators is then given by
\begin{equation}
      U(f)^\dag \hat{\phi}^\alpha(\chi) U(f) = \int_{S^1} \chi(z) \left(\frac{df}{dz}\bigg|_{f^{-1}(z)}\right)^{h_{\alpha}} \hat{\phi}^\alpha(f^{-1}(z))\, dz
\end{equation}
Making the change of variable $x= f^{-1}(z)$, we find that
\begin{equation}
      U(f)^\dag \hat{\phi}^\alpha(\chi) U(f) = \int_{S^1} \chi(f(x)) \left(\frac{df}{dx}\right)^{h_{\alpha}+1} \hat{\phi}^\alpha(x)\, dx.
\end{equation}
When $\chi(z)$ is the constant function $\chi(x)=1$ we see that $U(f)^\dag \hat{\phi}^\alpha(\chi) U(f)$ is the field operator smeared by a simple function
\begin{equation}
      U(f)^\dag \hat{\phi}^\alpha(\chi) U(f) = \sum_{I\in \mathcal{P}} 2^{c_I(h_\alpha+1)}\hat{\phi}^\alpha(\chi_I) = \hat{\phi}^\alpha(s),
\end{equation}
where
\begin{equation}
      s(x) = \sum_{I\in \mathcal{P}}2^{c_I(h_\alpha+1)}\chi_I(x),
\end{equation}
$\chi_I$ is the indicator function for the set $I$, $\mathcal{P}$ is a partition good for $f$, and $2^{c_I}$ is the gradient of the function $f$ on the interval $I$.

What we see is that expectation values with respect to the vacuum of a field operator smeared with certain special simple functions can be directly related to the expectation values of the evenly smeared field operator with respect to a transformed vacuum.

\section{Application: Spin system}\label{sec:example1}

Here we apply the formalism of the previous sections to a simple example quantum spin system comprised of a lattice of \emph{qutrits}, i.e.,
\begin{equation}
	\mathcal{H}_N \cong \bigotimes_{j=0}^{2^m-1} \mathbb{C}^3,
\end{equation}
where, as usual, we set $N=2^m$. We employ quantum notation and choose the perfect tensor $V\colon\mathbb{C}^3\to\mathbb{C}^3\otimes\mathbb{C}^3$ from eq.~(\ref{eq:perfect-example}) given by
\begin{equation}
	\langle jk|V|l\rangle = \begin{cases}
		0 & \text{if $j=k$, $k=l$, or $l=j$,}\\
		\frac{1}{\sqrt{2}} & \text{otherwise.}
	\end{cases}
\end{equation}
The ascending operator $\mathcal{E}$ constructed from this perfect tensor has the three eigenvalues
\begin{equation}
	\lambda_1=1, \qquad \lambda_\alpha=-\frac{1}{2}, \qquad \lambda_\beta=\frac{1}{2}.
\end{equation}
$\lambda_1=1$ has the (right) eigenvector $\mu^1=\mathbb{I}$; $\lambda_\alpha=-\frac{1}{2}$ has eigenvectors
\begin{gather}
	\mu^{\delta^1}=\begin{pmatrix}
		-1&0&0\\
		0&0&0\\
		0&0&1
	\end{pmatrix}, \qquad
	\mu^{\alpha^1}=\begin{pmatrix}
		0&0&0\\
		0&0&-1\\
		0&1&0
	\end{pmatrix}, \qquad
	\mu^{\alpha^2}=\begin{pmatrix}
		0&0&-1\\
		0&0&0\\
		1&0&0
	\end{pmatrix},\\
	\mu^{\delta_2}=\begin{pmatrix}
		-1&0&0\\
		0&1&0\\
		0&0&0
	\end{pmatrix}, \qquad
	\mu^{\alpha^3}=\begin{pmatrix}
		0&-1&0\\
		1&0&0\\
		0&0&0
	\end{pmatrix};
\end{gather}
$\lambda_{\beta}=\frac{1}{2}$ has eigenvectors
\begin{equation}
	\mu^{\beta^1}=\begin{pmatrix}
		0&0&0\\
		0&0&1\\
		0&1&0
	\end{pmatrix}, \qquad
	\mu^{\beta^2}=\begin{pmatrix}
		0&0&1\\
		0&0&0\\
		1&0&0
	\end{pmatrix}, \qquad
	\mu^{\beta^3}=\begin{pmatrix}
		0&1&0\\
		1&0&0\\
		0&0&0
	\end{pmatrix}.
\end{equation}
They result in the fusion rules
\begin{center}
	\begin{tabular}{|Sc|Sc|Sc|Sc|Sc|Sc|Sc|Sc|Sc|Sc|} \hline
		\rowcolor{gray}${\times}$&$1$&$\delta^1$&$\delta^2$&$\beta^1$&$\beta^2$&$\beta^3$&$\alpha^1$&$\alpha^2$&$\alpha^3$\\ \hline
	 	\cellcolor{gray}$1$&$1$&$\delta^1$&$\delta^2$&$\beta^1$&$\beta^2$&$\beta^3$&$\alpha^1$&$\alpha^2$&$\alpha^3$\\ \hline
		\cellcolor{gray}$\delta^1$&$\delta^1$&$\Sigma$&$\Sigma$&$\beta^1$&$0$&$\beta^3$&$\alpha^1$&$0$&$\alpha^3$\\ \hline
		\cellcolor{gray}$\delta^2$&$\delta^2$&$\Sigma$&$\Sigma$&$\beta^1$&$\beta^2$&$0$&$\alpha^1$&$\alpha^2$&$0$\\ \hline
		\cellcolor{gray}$\beta^1$&$\beta^1$&$\beta^1$&$\beta^1$&$\Sigma$&$\beta^3$&$\beta^2$&$0$&$\alpha^3$&$\alpha^2$\\ \hline
		\cellcolor{gray}$\beta^2$&$\beta^2$&$0$&$\beta^2$&$\beta^3$&$\Sigma$&$\beta^1$&$\alpha^3$&$0$&$\alpha^1$\\ \hline
		\cellcolor{gray}$\beta^3$&$\beta^3$&$\beta^3$&$0$&$\beta^2$&$\beta^1$&$\Sigma$&$\alpha^2$&$\alpha^1$&$0$\\ \hline
		\cellcolor{gray}$\alpha^1$&$\alpha^1$&$\alpha^1$&$\alpha^1$&$0$&$\alpha^3$&$\alpha^2$&$\Sigma$&$\beta^3$&$\beta^2$\\ \hline
		\cellcolor{gray}$\alpha^2$&$\alpha^2$&$0$&$\alpha^2$&$\alpha^3$&$0$&$\alpha^1$&$\beta^3$&$\Sigma$&$\beta^1$\\ \hline
		\cellcolor{gray}$\alpha^3$&$\alpha^3$&$\alpha^3$&$0$&$\alpha^2$&$\alpha^1$&$0$&$\beta^2$&$\beta^1$&$\Sigma$\\ \hline
	\end{tabular}
\end{center}
with $\Sigma=1+\delta^1+\delta^2$.

From the eigenvalues we get $h_1=0$ and $h_\alpha=h_\beta=1$. For the OPE, we give the two examples
\begin{align}
	\hat{\phi}^{\delta^1}(x)\hat{\phi}^{\delta^2}(y) &\sim -\frac{1}{6} D(x, y)^{-2} \hat{\phi}^1(y) -\frac{1}{3} D(x, y)^{-1} \hat{\phi}^{\delta^1}(y) - \frac{1}{3} D(x, y)^{-1}\hat{\phi}^{\delta^2}(y),\\
	\hat{\phi}^{\beta^2}(x)\hat{\phi}^{\alpha^3}(y) &\sim \frac{1}{3} D(x, y)^{-1} \hat{\phi}^{\alpha^1}(y).
\end{align}

\section{Application: the Fibonacci lattice}\label{sec:example2}

We now illustrate the formalism developed in the previous sections in terms of a tree state defined for the Fibonacci category $\mathcal{F}$. The computations in this section were performed using the \textit{TriCats} package \cites{stiegemannSupp,stiegemannTriCats}.

The fusion ring of $\mathcal{F}$ is generated by the two elements $1$ and $\tau$ and fusion rules
\begin{align}
	1 \times 1&= 1\\
	1 \times \tau &= \tau\\
	\tau\times\tau&=1+\tau.
\end{align}
$\mathcal{F}$ is a trivalent category with $\dim\mathcal{C}_4=2$ and $d=\frac{1}{2}(1\pm\sqrt{5})$, and it is a special case of an $\mathit{SO}(3)_q$ category with $q=4$. $\mathcal{C}_4$ is spanned by the two vectors
\begin{equation}\label{eq:fibbasis}
	\begin{tikzpicture}[scale=0.5]
    \draw (0, 0) .. controls (1, 1) .. (0, 2);
    \draw (2, 0) .. controls (1, 1) .. (2, 2);
  \end{tikzpicture}\,,\quad%
	\begin{tikzpicture}[scale=0.5]
    \draw (0, 0) .. controls (1, 1) .. (2, 0);
    \draw (0, 2) .. controls (1, 1) .. (2, 2);
  \end{tikzpicture}\,.
\end{equation}

We use a modification of the trivalent vertex, effectively doubling lines and replacing the trivalent vertex with
\begin{equation}
	V=\begin{tikzpicture}[scale=0.6,baseline=10mm]
    \draw[rounded corners] (2.5, 3.5) -- (2.5, 2) -- (1.875, 1.375);
    \draw (2.25, 1) -- (2.75, 0.5);
    \draw[rounded corners] (2.5, 0.75) -- (1.25, 2) -- (-0.25, 0.5);
    \draw[draw=white,double=black,double distance=0.4pt,line width=3pt] (2, 3.5) -- (2, 2) -- (0.5, 0.5);
		\draw (2, 2) -- (2.2, 1.8);
    \draw[draw=white,double=black,double distance=0.4pt,line width=3pt] (2.1, 1.9) -- (3.5, 0.5);
  \end{tikzpicture}.
\end{equation}
The braiding appearing in $V$ is given by
\begin{equation}
	\begin{tikzpicture}[scale=0.5,baseline=3.5mm]
    \draw (0, 2) -- (2, 0);
    \draw[draw=white,double=black,double distance=0.4pt,line width=3pt] (0, 0) -- (2, 2);
  \end{tikzpicture}
	=
	\begin{tikzpicture}[scale=0.5,baseline=3.5mm]
    \draw (0, 0) .. controls (1, 1) .. (0, 2);
    \draw (2, 0) .. controls (1, 1) .. (2, 2);
  \end{tikzpicture}
	+e^{4i\pi/5}
  \begin{tikzpicture}[scale=0.5,baseline=3.5mm]
    \draw (0, 0) .. controls (1, 1) .. (2, 0);
    \draw (0, 2) .. controls (1, 1) .. (2, 2);
  \end{tikzpicture}.
\end{equation}
In the basis (\ref{eq:fibbasis}), the ascending operator has matrix elements
\begin{equation}
	\begin{pmatrix}
		1 & \frac{1}{2}(3-\sqrt{5})\\
		0 & \frac{1}{2}(3-\sqrt{5})
	\end{pmatrix}
\end{equation}
and eigenvalues
\begin{equation}
	\lambda_1=1, \qquad \lambda_\tau=\frac{1}{2}\bigl(3-\sqrt{5}\bigr).
\end{equation}
The fusion coefficients are given by
\begin{equation}
	f^1=\begin{pmatrix}
		1 & 0\\
		0 & \frac{1}{2}(3-\sqrt{5})
	\end{pmatrix},\quad
	f^\tau=\begin{pmatrix}
		0 & \frac{1}{2}(3-\sqrt{5})\\
		\sqrt{5}-2 & 5-2\sqrt{5}
	\end{pmatrix}.
\end{equation}
The OPE then gives us the short-distance behaviour
\begin{align}
	\hat{\phi}^1(x)\hat{\phi}^1(y) &\sim \hat{\phi}^1(y), \\
	\hat{\phi}^1(x)\hat{\phi}^\tau(y) &\sim \frac{1}{2}(3-\sqrt{5}) \hat{\phi}^\tau(y), \\
	\hat{\phi}^\tau(x)\hat{\phi}^\tau(y) &\sim (\sqrt{5}-2)D(x, y)^{-2h_\tau}	\hat{\phi}^1(y) + (5-2\sqrt{5})D(x, y)^{-h_\tau}\hat{\phi}^\tau(y)
\end{align}
with $h_\tau=-\log_2\bigl( \frac{1}{2}(3-\sqrt{5}) \bigr)\approx 1.388$.
We obtain a representation of the fusion ring via the matrices
\begin{equation}
	N^1=\begin{pmatrix}
		1 & 0\\
		0 & 1
	\end{pmatrix},\quad
	N^\tau=\begin{pmatrix}
		0 & 1\\
		1 & 1
	\end{pmatrix}.
\end{equation}

\section{The search for an energy momentum tensor}\label{sec:em}

We have made some progress on extracting information from a unitary representation of Thompson's groups $F$ and $T$ resembling  conformal data. Indeed, we are able, in special cases, to extract a fusion ring from a representation. However, the goal of producing the conformal data is not complete because we have not yet identified a corresponding \emph{central charge}. This goal is much more challenging, and we'll content ourselves here with outlining the steps required to carry it out.

According to physical arguments a quantum field with scaling dimension
\begin{equation}
	h = 2 = -\log_2(\lambda)
\end{equation}
corresponds to the energy-momentum tensor $T(x)$. However, it is not so simple as that! We also need that under a conformal transformation $w = f(z)$ that $T(z)$ transforms like
\begin{equation}\label{eq:energymomentumtx}
	T(w) = \left(\frac{df}{dz}\right)^{-2}\left(T(z)-\frac{c}{12}\Sch(f,z)\right),
\end{equation}
where $\Sch(f,z)$ is the \emph{Schwarzian derivative}
\begin{equation}
	\Sch(f,z) = \frac{f'''}{f'} - \frac{3}{2}\left(\frac{f''}{f'}\right)^2.
\end{equation}

If we are to suppose that Thompson group elements morally correspond to conformal transformations then if we compare (\ref{eq:energymomentumtx}) with (\ref{eq:nptaction}) we observe a tension between the two transformation laws. Indeed, this tension is what motivates the terminology of ``primary'' for the fields $\hat{\phi}^\alpha(x)$ we introduced. There are two aspects to this: firstly, finding an analogue of $\Sch(f,z)$ for nondifferentiable functions and, secondly, finding operators that exhibit a second Schwarzian-like term at all.  There is one fairly natural candidate for the first problem, namely, exploiting the fact that elements of Thompson's group $T$ may be identified with piecewise linear elements of $\textsl{PSL}(2,\mathbb{Z})$ allows us to realise $f$ as a piecewise $\textsl{PSL}(2,\mathbb{Z})$ map acting on $\mathbb{R}\cup \{\infty\}$. This identification is due to Thurston and is described in detail in \cite{Navas_2011} by Navas. This remarkable identification allows us to realise each element of $T$ as a piecewise function on $\mathbb{R}$ which has Lipschitz continuous first derivative. This, in turn, allows us to calculate the Schwarzian for $f$: we find that $\Sch(f,z)$ is a sum of delta functions (it vanishes on the piecewise parts) at the breakpoints with coefficients determined by the difference of the logarithm of the derivatives of the Thompson group element as a piecewise linear function on $S^1$:
\begin{equation}
  \Sch(f,z) = 2\sum_{z_j\in B_f} (\log_2(f'|_{x_j^+})-\log_2(f'|_{x_j^-}))\delta(z-z_j),
\end{equation}
where $x_j$ is the dyadic rational in $[0,1)$ corresponding to $z_j\in\mathbb{R}$. Thus the Schwarzian of an element of Thompson's group $T$ may be interpreted as a Dirac measure and, upon substituting into a smeared field expression, gives us the desired transformation properties.

The second problem is harder to solve: to find the energy momentum tensor we want to find the generator of infinitesimal conformal transformations. Thus, naively, we want to Taylor expand around a \emph{small} Thomson group element $f\in F$ or $f\in T$ where $f\sim \text{id}+\epsilon$ and write
\begin{equation}
	\pi(f) \approx \mathbb{I} + \pi(\epsilon),
\end{equation}
and identify $T(z)$ with some function of $\pi(\epsilon(z))$. This of course presupposes that our unitary representation $\pi$ is continuous with respect to the standard $L_1$ topology on $F$ (it isn't in general) \cites{jones_no-go_2016,klieschContinuumLimitsHomogeneous2018}. It is not really clear what to do here.

An alternative approach is as follows. We know from CFT that the energy momentum tensor is a \emph{descendent} of a primary field, namely $\mathbb{I}$. What does that mean here? One answer, which we intend to pursue, is to look at ascending operators which extend over several sites. The simplest example of such a thing would be the discrete difference of two ascending operators, e.g.,
\begin{equation}
	\frac{1}{2^m}(\mu^{\alpha}_{j+1} - \mu^\alpha_{j}).
\end{equation}
Depending on $j$ these operators can either vanish, or transform nontrivially, i.e., the action of the ascending channel $\mathcal{E}$ is no longer translation invariant. Suppose we can build something analogous to $\mu^\alpha$ on $k$ contiguous sites. Unfortunately we can no longer expect a nice formula such as (\ref{eq:discretefieldoperator}) for the corresponding field in the continuum limit, so we need to take a somewhat more indirect route to defining such a field operator.

To do so we go back to the original definition (\ref{eq:smearedfieldoperator}) of the smeared field operator and work with smearing functions such as $e^{ikx}$. This is a much more tedious process. One crucial aspect of this approach is that transformation laws such as (\ref{eq:energymomentumtx}) can be possible: additional terms such as the Schwarzian on the right-hand side \emph{can} arise for descendent fields as described above because the extended ascending operators are now sensitive to the presence of breakpoints in elements of Thompson's groups.

\section{Discussion and conclusions}
In this paper we have commenced the study of \emph{Thompson field theory}, the theory of local field-like observables for Jones' semicontinuous limit unitary representations of $F$ and $T$. We have explained how to introduce such fields by renormalising local operators in the observable algebra in such a way that correlation functions converge in the limit of infinite refinement. We also explained how to calculate $n$-point correlation functions for these fields. The short-distance behaviour of the $n$-point functions was explored leading to the identification of an operator product expansion for quasi-primary fields. The transformation laws for $n$-point functions under Thompson group elements was also derived and, in the special case of $\textsl{PSL}(2,\mathbb{Z})$ invariance we reveal a striking analogy to the corresponding laws in conformal field theory.

\section*{Acknowledgements}
Numerous helpful discussions and correspondence with C\'edric B\'eny, Dietmar Bisch, Marcus Cramer, Andrew Doherty, Terry Farrelly, Steve Flammia, Terry Gannon, Jutho Haegeman, Vaughan Jones, Robert K\"onig, Gerard Milburn, Scott Morrison, Emily Peters, Terry Rudolph, Noah Snyder, Tom Stace, Frank Verstraete, Michael Walter, Reinhard Werner, and Ramona Wolf are gratefully acknowledged.

This work was supported by the ERC grants QFTCMPS and SIQS, the DFG through SFB 1227 (DQ-mat), the RTG 1991, the cluster of excellence EXC 201 Quantum Engineering and Space-Time Research, and the Australian Research Council Centre of Excellence for Engineered Quantum Systems (EQUS, CE170100009).

\bibliography{qftg}

\appendix

\section{Some observations concerning trees}\label{app:trees}

Here we collect together some basic observations concerning trees and the circle. Our systems are thought of as living on the circle $S^1$ which is taken to be the interval $[0,1]$ with $0$ and $1$ identified. It is rather convenient to express points $x\in S^1$ in terms of their binary expansions, i.e., we write
\begin{equation}
	x = 0.x_{-1}x_{-2}\cdots x_{-l}, \quad x_{-j} \in \{0,1\}, \quad j = 1, 2, \ldots, l,
\end{equation}
to stand for the representation
\begin{equation}
	x = \sum_{j=1}^l \frac{x_{-j}}{2^{j}},
\end{equation}
for some $l\in \mathbb{Z}^+$.

We introduce the somewhat baffling operation $\ominus$ on $x$ and $y$ in $S^1$ according to
\begin{equation}
	y\ominus x = \sum_{j=1}^l \frac{(y_{-j}-x_{-j})\, \text{mod $2$}}{2^{j}},
\end{equation}
where the arithmetic in the term $y_{-j}-x_{-j}$ is carried out in the finite field $\mathbb{F}_2$ and then embedded back in $\mathbb{R}$ in the natural way.
We pad out the expansions of $x$ or $y$ with zeros as necessary. $x\ominus y$ corresponds to bitwise XOR on the binary digits of $x$ and $y$.

We identify partitions of $S^1$ with trees in the standard way:
\begin{align}
	\{[0,1)\} &\leftrightarrow \mathcal{T}_0 \\
	\{[0,\tfrac12), [\tfrac12,1)\} &\leftrightarrow \mathcal{T}_1 \\
	\{[0,\tfrac14), [\tfrac14,\tfrac12), [\tfrac12,\tfrac34), [\tfrac34,1)\} &\leftrightarrow \mathcal{T}_2 \\
	&\ \vdots,
\end{align}
where $\mathcal{T}_l$ is the regular binary tree with $2^l$ leaves. Each interval in the partition is identified with a leaf of $\mathcal{T}_l$. The nonnegative integer $l$ is called the \emph{level}.

We can alternatively specify a standard dyadic interval $[x,y)= [\tfrac{j}{2^l}, \tfrac{j+1}{2^l})$ by simply writing out the left end point in binary to $l$ significant digits:
\begin{equation}
	[\tfrac{j}{2^l}, \tfrac{j+1}{2^l}) \leftrightarrow 0.x_{-1}x_{-2}\cdots x_{-l}.
\end{equation}
Here the number $l$ of significant digits, the \emph{level}, tells us what kind of standard dyadic interval it is: once you know $x$ you can get $y$ by adding $1/2^l$. Here is a simple example:
\begin{equation}
	[\tfrac{13}{32}, \tfrac{14}{32}) \leftrightarrow 0.01101.
\end{equation}
In this way we can label the leaves of $\mathcal{T}_l$ with binary expansions with exactly $l$ significant digits.

We introduce the following \emph{tree metric} on the leaves of the regular binary tree $\mathcal{T}_l$ as follows. Let $x$ and $y$ be the binary labels corresponding to two leaves of $\mathcal{T}_l$ and recursively define
\begin{equation}
	d_T(x,y) = 1+ d_T(x^{(1)},y^{(1)})
\end{equation}
and
\begin{equation}
	d_T(x,x) = 0, \quad \forall x,
\end{equation}
where
\begin{equation}
	x^{(j)} =  0.x_{-1}x_{-2}\cdots x_{-l+j},
\end{equation}
i.e., by dropping the last $j$ digits of the binary expansion for $x$.
For example, if $x = 13/32$ and $y = 15/32$ we have
\begin{equation}
	d_T(0.01101, 0.01111) = 1+ d_T(0.0110, 0.0111) = 2+d_T(0.011, 0.011) = 2.
\end{equation}

\begin{lemma}\label{lem:treemetric}
The tree metric between $x$ and $y$ in $S^1$ labelling the leaves of $\mathcal{T}_l$ may be computed according to
\begin{equation}
	d_T(x,y) = l+1+\lfloor \log_2(y\ominus x) \rfloor.
\end{equation}
\end{lemma}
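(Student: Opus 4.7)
The plan is to reduce the formula to a statement about the length of the longest common prefix of the binary expansions of $x$ and $y$, and then verify it from both sides of the claimed identity. Specifically, let $k$ be the largest integer $0 \le k \le l$ such that $x_{-j} = y_{-j}$ for all $j \in \{1, \ldots, k\}$ (so $k = l$ iff $x=y$). I will show that both $d_T(x,y)$ and $l+1+\lfloor \log_2(y\ominus x)\rfloor$ equal $l-k$ whenever $x\ne y$, from which the lemma follows (the case $x = y$ is excluded, since $\log_2 0$ is undefined).

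First I would handle the tree-metric side by induction on $l-k$. If $l-k=0$, i.e., $x=y$, then $d_T(x,y)=0$ by definition, so assume $l - k \ge 1$. The operation $x \mapsto x^{(1)}$ drops the last digit, so the pair $(x^{(1)}, y^{(1)})$ lives at level $l-1$ and still has common prefix of length exactly $k$ (since the first $k$ digits are unchanged and $k \le l-1$). By the inductive hypothesis $d_T(x^{(1)},y^{(1)}) = (l-1)-k$, and the recursion gives $d_T(x,y) = 1 + (l-1-k) = l-k$ as claimed.

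Next I would compute the right-hand side. Since the first $k$ binary digits of $x$ and $y$ agree and the $(k+1)$-th digits differ (and $k+1 \le l$), the XOR has the form
\begin{equation}
y\ominus x = 0.\underbrace{0\cdots 0}_{k}\,1\,z_{-k-2}\cdots z_{-l},
\end{equation}
with the leading $1$ in position $k+1$ and arbitrary bits thereafter. Consequently
\begin{equation}
\frac{1}{2^{k+1}} \le y\ominus x < \frac{1}{2^k},
\end{equation}
which gives $\lfloor \log_2(y\ominus x)\rfloor = -(k+1)$. Substituting, $l+1+\lfloor\log_2(y\ominus x)\rfloor = l-k$, matching the tree metric.

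The main (minor) obstacle is bookkeeping around edge cases: verifying that the induction base handles $l - k = 1$ correctly (where the bits differ in the last position and $y\ominus x = 2^{-l}$, giving floor $-l$ and thus $d_T = 1$), and ensuring the leading-$1$ interval inclusion $[2^{-k-1}, 2^{-k})$ is strict on the right even when the trailing bits $z_{-k-2}\cdots z_{-l}$ are all $1$. The latter holds because there are only $l - k - 1$ such trailing positions, so $y \ominus x \le 2^{-k-1} + (2^{-k} - 2^{-l}) < 2^{-k}$. Everything else is bookkeeping of binary expansions.
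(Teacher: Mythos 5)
Your proof is correct and follows essentially the same route as the paper's: both arguments reduce the identity to locating the first position where the binary digits of $x$ and $y$ differ, showing the XOR $y\ominus x$ lies in $[2^{-(k+1)},2^{-k})$ and hence that the floor of its base-$2$ logarithm is $-(k+1)$. Your version merely makes explicit (via induction on $l-k$) the step the paper asserts without proof, namely that $d_T(x,y)=j$ exactly when the expansions share the first $l-j$ digits.
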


As can be seen from the previous example and made rigorous in the lemma, $d_T$ counts, from the right of the binary expansions of $x$ and $y$, the leftmost position at which the digits of $x$ and $y$ are different.

\begin{proof}
	Suppose that $d_T(x,y) = j$. Then we know that $x$ and $y$ share the same first $l-j$ digits, i.e.,
	\begin{equation}
		x = 0.x_{-1}x_{-2}\cdots x_{-l}, \quad \text{and}\quad y=0.x_{-1}x_{-2}\cdots x_{-l+j} y_{-l+j-1}\cdots y_{-l}.
	\end{equation}
	Now notice that
	\begin{equation}
		y\ominus x = 0.00\cdots 0 (y_{-l+j-1}\oplus x_{-l+j-1})\cdots (y_{-l}\oplus x_{-l}).
	\end{equation}
	In particular, note that the digit in the $(l-j+1)$ term is $1$. Hence
	\begin{equation}
		y\ominus x = 0.00\cdots 0 1 \cdots (y_{-l}\oplus x_{-l}) = \frac{1}{2^{l-j+1}}(1+\delta),
	\end{equation}
	where $\delta \in [0,\tfrac{1}{2})$. Take logs of both sides to find
	\begin{equation}
		\log_2(y\ominus x) = -(l-j+1) + \log_2(1+\delta).
	\end{equation}
	Adding $l$ to both sides and taking the floor gives the answer.
\end{proof}

For the special case where $x=0$ and $y = x$ we have the formula
\begin{equation}
	d_T(0,x) = l+1+\lfloor \log_2(x) \rfloor.
\end{equation}

We note the following
\begin{lemma}
	Let $x$ and $y$ be two $l$-digit binary numbers in $[0, 1)$ with $y\ge x$. Then
	\begin{equation}
		y\ominus x \ge y-x
	\end{equation}
	and, hence,
	\begin{equation}
		d_T(x,y) \ge l+1+\lfloor \log_2(|y-x|) \rfloor.
	\end{equation}
\end{lemma}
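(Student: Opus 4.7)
The plan is to reduce the inequality $y \ominus x \ge y - x$ to an elementary arithmetic identity relating bitwise XOR to ordinary addition. Specifically, I would first establish that for $l$-digit binary expansions one has
\[
y \ominus x = x + y - 2(x \wedge y),
\]
where $x \wedge y \equiv \sum_{j=1}^{l} 2^{-j}\, x_{-j} y_{-j}$ is the real number obtained by taking the bitwise AND of the expansions. This is verified by checking the four cases $(x_{-j},y_{-j}) \in \{0,1\}^2$ bit by bit, noting that $x_{-j} \oplus y_{-j} = x_{-j} + y_{-j} - 2 x_{-j} y_{-j}$, and then summing against the weights $2^{-j}$.

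Given the identity, the first assertion follows immediately: since $x_{-j} y_{-j} \le x_{-j}$ for every bit, we get $x \wedge y \le x$ as real numbers, so
\[
(y \ominus x) - (y - x) = 2\bigl(x - x \wedge y\bigr) \ge 0.
\]

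For the second assertion I would invoke Lemma~\ref{lem:treemetric}, which states $d_T(x,y) = l + 1 + \lfloor \log_2(y \ominus x) \rfloor$ for $x \ne y$. When $y > x$ strictly, both $y \ominus x$ and $y - x$ are positive, so the monotonicity of $\log_2$ on $(0,\infty)$ and of the floor function yield
\[
\lfloor \log_2(y \ominus x) \rfloor \ge \lfloor \log_2(y - x) \rfloor = \lfloor \log_2 |y-x| \rfloor,
\]
and combining this with the formula from Lemma~\ref{lem:treemetric} gives the desired bound. The boundary case $y = x$ is trivial since then $d_T(x,y) = 0$ and $\lfloor \log_2 0 \rfloor = -\infty$.

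I do not expect a real obstacle here; the only mildly delicate point is to be consistent about representations, since dyadic rationals admit two binary expansions. Restricting to the fixed $l$-digit representation (as the hypothesis does) removes this ambiguity, so the bitwise XOR--AND identity is unambiguous and the case analysis is routine.
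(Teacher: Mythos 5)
Your proof is correct and takes essentially the same route as the paper's: your bitwise identity $y\ominus x = x+y-2(x\wedge y)$ is algebraically identical to the paper's decomposition $y\ominus x = y-x+\delta$ with $\delta = 2(x - x\wedge y)\ge 0$, obtained there by the per-bit comparison of $y_{-j}-x_{-j}$ with $(y_{-j}-x_{-j})\bmod 2$. The deduction of the tree-metric bound via Lemma~\ref{lem:treemetric} and monotonicity of $\lfloor\log_2(\cdot)\rfloor$ matches the paper's implicit ``hence'' step.
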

\begin{proof}
	First note that for $a,b\in \{0,1\}$:
	\begin{equation}
		a-b = \bigl((a-b)\, \text{mod $2$}\bigr) - 2\delta_{a,1}\delta_{b,0},
	\end{equation}
	so that
	\begin{equation}
		a-b \le (a-b)\, \text{mod $2$}.
	\end{equation}
	\begin{equation}
		y\ominus x = \sum_{j=1}^l \frac{(y_{-j}-x_{-j})\, \text{mod $2$}}{2^{j}} = y-x + \delta,
	\end{equation}
	where
	\begin{equation}
		\delta = 2\sum_{j=1}^l \frac{\delta_{x_{-j},1}\delta_{y_{-j},0}}{2^{j}}.
	\end{equation}
	Since $\delta$ is nonnegative we have that
	\begin{equation}
		y\ominus x \ge y-x.
	\end{equation}
\end{proof}

\section{Jordan form for CP maps}\label{app:jordan}

In this appendix we collect together some facts about the Jordan normal form for completely positive maps on $M_n(\mathbb{C})$.

Our observable algebra $\mathcal{A}$ is always a subset of the algebra $\mathcal{B}(\mathcal{H})$ of bounded operators on a finite dimensional Hilbert space $\mathcal{H}$, i.e., $\mathcal{A}\subset M_n(\mathbb{C})$, where $M_n(\mathbb{C})$ is the algebra of $n\times n$ complex matrices. The state space of $\mathcal{A}$ is denoted $\mathcal{S}(\mathcal{A})$, and is given by the set of all positive normalised linear functionals $\omega$ on $\mathcal{A}$.  Any state $\omega$ may be represented by a density operator $\rho \in \mathcal{B}(\mathcal{H})$ via $\omega(A) = \tr(\rho A)$ for all $A$. We have that $\tr(\rho) = 1$ and $\rho \ge 0$.

A \emph{completely positive map} (CP map) $\mathcal{E}:\mathcal{A}\rightarrow \mathcal{A}$ is of the form
\begin{equation}
	\mathcal{E}(X) = \sum_{\alpha} A_\alpha^\dag X A_\alpha, \quad \sum_{\alpha} A_\alpha^\dag A_\alpha = \mathbb{I},
\end{equation}
where $A_\alpha \in \mathcal{B}(\mathcal{H})$, and $\{\alpha\}$ may be chosen finite. We say that $\mathcal{E}$ acts in the \emph{Heisenberg picture}. The dual map $\mathcal{E}^\times : \mathcal{S}(\mathcal{A})\rightarrow \mathcal{S}(\mathcal{A})$ acting on states is given by
\begin{equation}
	\mathcal{E}^\times(\rho) = \sum_{\alpha} A_\alpha \rho A_{\alpha}^\dag.
\end{equation}
The dual map is said to be acting in the \emph{Schr\"odinger picture}.

Define the inner product
\begin{equation}
	(A,B) = \frac1n\tr(A^\dag B),
\end{equation}
called the \emph{Hilbert-Schmidt} inner product. Using $(\cdot, \cdot)$ we obtain a concrete matrix representation of a CP map $\mathcal{E}$ via
\begin{equation}
	[\mathbb{E}]_{\alpha\beta} = (\eta^\alpha, \mathcal{E}(\eta^\beta)),
\end{equation}
where $\eta^\alpha$ is a complete operator basis:
\begin{equation}
	(\eta^{\alpha}, \eta^{\beta}) = \delta^{\alpha\beta}, \quad \alpha, \beta = 1,2, \ldots, n^2.
\end{equation}

We have the following
\begin{proposition}
	If $\mathcal{E}$ is a CP map on $\mathcal{A}$ then its spectral radius satisfies
	\begin{equation}
	r(\mathcal{E}) \le \|\mathcal{E}(\mathbb{I})\|_{\infty}.
	\end{equation}
\end{proposition}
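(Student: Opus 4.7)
The plan is to reduce the claim to two well-known ingredients: a Russo--Dye type identity $\|\mathcal{E}\|_\infty = \|\mathcal{E}(\mathbb{I})\|_\infty$ for completely positive maps, and Gelfand's formula $r(\mathcal{E}) = \lim_{k\to\infty} \|\mathcal{E}^k\|_\infty^{1/k}$. Once these are in hand, the bound follows by iterating positivity.

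First I would prove, as a lemma, that $\|\mathcal{E}\|_\infty = \|\mathcal{E}(\mathbb{I})\|_\infty$ whenever $\mathcal{E}$ is CP on $M_n(\mathbb{C})$. For self-adjoint $X$ with $\|X\|_\infty \le 1$ one has $-\mathbb{I}\le X \le \mathbb{I}$, and positivity yields $-\mathcal{E}(\mathbb{I}) \le \mathcal{E}(X) \le \mathcal{E}(\mathbb{I})$, so $\|\mathcal{E}(X)\|_\infty \le \|\mathcal{E}(\mathbb{I})\|_\infty$. For arbitrary $X$ with $\|X\|_\infty \le 1$ I would use the standard $2\times 2$ trick: the block matrix $\bigl(\begin{smallmatrix}\mathbb{I} & X \\ X^\dag & \mathbb{I}\end{smallmatrix}\bigr)$ is positive in $M_2(M_n(\mathbb{C}))$, and complete positivity of $\mathcal{E}$ applied to this block gives $\bigl(\begin{smallmatrix}\mathcal{E}(\mathbb{I}) & \mathcal{E}(X) \\ \mathcal{E}(X)^\dag & \mathcal{E}(\mathbb{I})\end{smallmatrix}\bigr) \ge 0$, from which $\|\mathcal{E}(X)\|_\infty \le \|\mathcal{E}(\mathbb{I})\|_\infty$ follows by the standard Schur-complement estimate. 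The reverse inequality $\|\mathcal{E}\|_\infty \ge \|\mathcal{E}(\mathbb{I})\|_\infty$ is immediate from $\|\mathbb{I}\|_\infty = 1$.

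Second, I would show by induction on $k$ that
\begin{equation}
\mathcal{E}^k(\mathbb{I}) \le \|\mathcal{E}(\mathbb{I})\|_\infty^{\,k}\,\mathbb{I}.
\end{equation}
The base case $k=1$ is the inequality $\mathcal{E}(\mathbb{I}) \le \|\mathcal{E}(\mathbb{I})\|_\infty\,\mathbb{I}$. For the step, since $\mathcal{E}$ is positive and $\mathcal{E}^{k-1}(\mathbb{I}) \le \|\mathcal{E}(\mathbb{I})\|_\infty^{\,k-1}\mathbb{I}$ by the inductive hypothesis, applying $\mathcal{E}$ to both sides and using the base case once more yields $\mathcal{E}^k(\mathbb{I}) \le \|\mathcal{E}(\mathbb{I})\|_\infty^{\,k-1}\mathcal{E}(\mathbb{I}) \le \|\mathcal{E}(\mathbb{I})\|_\infty^{\,k}\mathbb{I}$. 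Since $\mathcal{E}^k$ is itself CP, the lemma from the previous paragraph gives $\|\mathcal{E}^k\|_\infty = \|\mathcal{E}^k(\mathbb{I})\|_\infty \le \|\mathcal{E}(\mathbb{I})\|_\infty^{\,k}$.

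Finally, Gelfand's spectral radius formula $r(\mathcal{E}) = \lim_{k\to\infty} \|\mathcal{E}^k\|_\infty^{1/k}$ yields $r(\mathcal{E}) \le \|\mathcal{E}(\mathbb{I})\|_\infty$, as required. The main obstacle in this plan is the Russo--Dye-type lemma in the first step; the $2\times 2$ positivity argument relies crucially on \emph{complete} positivity (mere positivity gives only the weaker $\|\mathcal{E}\|_\infty \le 2\|\mathcal{E}(\mathbb{I})\|_\infty$), so care must be taken to invoke CP rather than positivity alone. Everything else is a soft consequence of order-theoretic bookkeeping and the standard spectral radius formula.
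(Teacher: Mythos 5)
The paper states this proposition without proof (it appears in Appendix~B as a bare assertion), so there is no argument of the authors' to compare against; judged on its own, your proof is correct. The $2\times 2$ block-positivity argument does establish $\|\mathcal{E}(X)\|_\infty \le \|\mathcal{E}(\mathbb{I})\|_\infty$ for all contractions $X$ (via $|\langle u, \mathcal{E}(X) v\rangle|^2 \le \langle u,\mathcal{E}(\mathbb{I})u\rangle\langle v,\mathcal{E}(\mathbb{I})v\rangle$), the order-theoretic induction $\mathcal{E}^k(\mathbb{I}) \le \|\mathcal{E}(\mathbb{I})\|_\infty^k\,\mathbb{I}$ is sound, and Gelfand's formula closes the argument. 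Two remarks. First, your steps two and three are redundant given step one: once you know $\|\mathcal{E}\|_\infty = \|\mathcal{E}(\mathbb{I})\|_\infty$, the bound $r(\mathcal{E}) \le \|\mathcal{E}\|_\infty$ for the spectral radius against any submultiplicative operator norm finishes the proof immediately, with no induction needed. Conversely, if you keep the induction, you can drop the Russo--Dye lemma almost entirely: positivity alone gives $\|\mathcal{E}^k\|_\infty \le 2\|\mathcal{E}^k(\mathbb{I})\|_\infty \le 2\|\mathcal{E}(\mathbb{I})\|_\infty^k$, and the factor $2$ disappears under the $k$-th root, so the proposition actually holds for merely positive maps; your worry about needing \emph{complete} positivity is justified only for the sharp norm identity, not for the spectral-radius bound itself. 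Second, note that with the paper's normalisation $\sum_\alpha A_\alpha^\dag A_\alpha = \mathbb{I}$ the map is unital and the statement reduces to $r(\mathcal{E})\le 1$; your proof covers the general non-unital case, which is what the proposition as written requires.
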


As a matrix $\mathbb{E} \in M_{n^2}(\mathbb{C})$, $\mathbb{E}$ admits a Jordan decomposition of the form
\begin{equation}
	\mathbb{E} = X\left(\bigoplus_{k=1}^K J_k(\lambda_k)\right)X^{-1}, \quad J_k(\lambda) = \begin{pmatrix}\lambda & 1  & \\ & \ddots & 1 \\ && \lambda \end{pmatrix}\in M_{d_k}(\mathbb{C}),
\end{equation}
where $J_k(\lambda)$ are the Jordan blocks of size $d_k$ with $\sum_{k} d_k = n^2$ and the number $K$ of Jordan blocks equals the number of distinct eigenvectors.

The \emph{geometric multiplicity} of an eigenvalue $\lambda$ is equal to the number of Jordan blocks of the form $J_k(\lambda)$. The joint dimension $\sum_{k} d_k \mathbf{I}[\lambda_k = \lambda] $ is called the \emph{algebraic multiplicity} of $\lambda$. The operator $\mathbb{E}$ is said to be non-defective if the geometric multiplicity of every eigenvalue $\lambda$ is equal to its algebraic multiplicity. We always assume, in the sequel, that our operators $\mathbb{E}$ are non-defective.

The Jordan decomposition for $\mathbb{E}$ allows us to infer the existence of left and right eigenvectors for $\mathbb{E}$, i.e., we can write
\begin{equation}
	\mathbb{E} = \sum_{k} \lambda_k |\mu_k )( \nu_k|, \quad (\nu_j|\mu_k) = \delta_{jk}.
\end{equation}

As a linear map on $\mathcal{A}$ we therefore have
\begin{equation}
	\mathcal{E}(A) = \frac{1}{n}\sum_{k} \lambda_k \tr(\nu_k^\dag A)\mu_k, \quad \frac1n\tr(\nu_j^\dag \mu_k) = \delta_{jk},
\end{equation}
so that
\begin{equation}
	\mathcal{E}(\mu_k) = \lambda_k \mu_k, \quad \forall k.
\end{equation}

\end{document}